\documentclass[journal]{IEEEtran}

\IEEEoverridecommandlockouts

\usepackage[utf8]{inputenc} 

\usepackage[T1]{fontenc}
\usepackage{url}
\usepackage{ifthen}
\usepackage{cite}
\usepackage{overpic}
\usepackage[cmex10]{amsmath} 

\usepackage{textcase}
\usepackage[tablename=Table]{caption}
\usepackage{blindtext}
\usepackage{floatrow}
\usepackage{soul}
\usepackage{changes}
\usepackage{gensymb}
\usepackage{cite}
\usepackage{amsmath,amssymb,amsfonts}

\usepackage{textcomp}
\usepackage{graphicx}

\usepackage{amssymb}
\usepackage{amsfonts}
\usepackage{amsmath}
\usepackage{epsfig}
\usepackage{epigraph}
\usepackage{color}
\usepackage{fancybox}
\usepackage{textcomp}
\usepackage{multirow}
\usepackage{setspace}
\usepackage{psfrag}
\usepackage{booktabs}
\usepackage{float}
\usepackage[caption = false]{subfig}
\usepackage{algorithm}
\usepackage{algpseudocode}
\usepackage{mathtools, nccmath, bigints, amsfonts}
\usepackage{array}
\usepackage{stfloats}
\newcolumntype{L}{>{\centering\arraybackslash}m{3cm}}

\usepackage{amsthm}
\theoremstyle{definition}
\usepackage{xcolor}
\usepackage{mathrsfs}
\usepackage{tikz}
\usepackage{balance}

\newcommand{\Ttran}{^{\mathsf{T}}}

\newfloatcommand{capbtabbox}{table}[][0.4\textwidth]
 \usepackage[font={small},figurename=Fig.]{caption}

\newtheorem{corollary}{Corollary}
\newtheorem{Lemma}{Lemma}

\newtheorem{proposition}{Proposition}
\makeatletter 
\pretocmd\@bibitem{\color{black}\csname keycolor#1\endcsname}{}{\fail}
\newcommand\citecolor[1]{\@namedef{keycolor#1}{\color{black}}}
\makeatother

    \def\Complex{{\rm\rule[.23ex]{.03em}{1.1ex}\kern-.3em{C}}}

    \newcommand{\be}{\begin{equation}} \newcommand{\ee}{\end{equation}}
    \newcommand{\bea}{\begin{eqnarray}} \newcommand{\eea}{\end{eqnarray}}
    \newcommand{\benum}{\begin{enumerate}} \newcommand{\eenum}{\end{enumerate}}

        \newcommand{\diag}{{\sf diag}}
        
        \newcommand{\tr}{{\sf tr}}

\def\Htran{\mbox{\tiny $\mathrm{H}$}}
\def\Ttran{\mbox{\tiny $\mathrm{T}$}}
\def\CN{\mathcal{N}_{\mathbb{C}}} 
\def\imagunit{\mathsf{j}} 

\def\Ncl{N_{\textrm{path}}}
\def\imaginary{\mathsf{j}} 
\def\Htran{\mbox{\tiny $\mathrm{H}$}}
\def\Ttran{\mbox{\tiny $\mathrm{T}$}}
\def\EVM{\mbox{\small $\mathrm{EVM}$}}
\newcommand{\vect}[1]{\mathbf{#1}}
\begin{document}

\title{Performance Evaluation of Movable Antenna Arrays in Wideband Multi-User MIMO Systems }
\author{Amna~Irshad,~Emil~Bj{\"o}rnson,~\IEEEmembership{Fellow,~IEEE},~Alva~Kosasih,~Vitaly~Petrov \\
\thanks{A preliminary version has been presented at the IEEE~SPAWC~2025~\cite{IEEESPAWC2025}. A.~Irshad, E.~Bj{\"o}rnson, and V.~Petrov are with the Department of Communication Systems,  KTH Royal Institute of Technology, Stockholm, Sweden. A.~Kosasih is with the Nokia Technology Standards, Espoo, Finland. Email:~\{amnai,emilbjo,vitalyp\}@kth.se; alva.kosasih@nokia.com. The research was supported by the Grant 2022-04222 from the Swedish Research Council and by the SweWIN center (Vinnova grant 2023-00572).}

}

\maketitle

\begin{abstract}
Future wireless networks are expected to support increasingly high data rates and user densities, motivating advanced multi-antenna architectures capable of adapting to dynamic propagation environments. Movable antenna (MA) arrays have recently emerged as an extension of massive MIMO, enabling physical repositioning of antenna elements to better exploit spatial diversity and mitigate inter-user interference. While prior studies report promising gains under idealized assumptions, their performance under realistic wideband multi-user operation remains insufficiently understood. This paper presents a comprehensive evaluation of MA-enabled systems in practical uplink and downlink scenarios. A wideband OFDM system model is developed, and novel closed-form sum-rate expressions are derived for both uplink and downlink under linear and nonlinear processing. Hardware impairments are incorporated via an EVM-based model, from which a distortion-aware UL/DL duality is established and the resulting high-SNR sum rate ceiling is analytically characterized. In addition, the interactions between antenna position optimization, receiver processing, and user loading are examined, and performance is evaluated under both time-division duplexing (TDD) and frequency-division duplexing (FDD). The results show that movable antennas can provide noticeable gains in low-impairment regimes with strong multi-user interference, but these benefits are highly scenario-dependent and diminish under hardware-impairment-limited conditions or in rich-scattering environments. These findings highlight the importance of carefully assessing deployment conditions when considering antenna mobility as an alternative to conventional fixed array configurations.

\end{abstract}

\begin{IEEEkeywords}
Movable Antennas, Multi-User MIMO, Wideband Channels, Beyond 5G, Hardware impairments, Narrowband Channels, Error vector magnitude.
\end{IEEEkeywords}

\section{Introduction}
Future wireless networks are expected to employ larger antenna arrays at base stations (BSs), compared to the currently deployed massive multiple-input multiple-output (MIMO) systems, which exploit spatial multiplexing to increase data rates~\cite{Marzetta2010a,Bjornson2019c}. Favorable propagation in massive MIMO systems is often observed when $M \gg K$, where $M$ is the number of antennas and $K$ is the number of multiplexed user devices. With many more antennas than users, the array offers surplus spatial degrees of freedom, which tends to make user channels approximately orthogonal under rich scattering or sufficiently distinct spatial signatures. However, in practice, scaling to very large antenna counts at the BS is challenging due to rapidly increasing hardware costs and circuit power (e.g., many RF chains are needed), as well as higher baseband processing and interconnect demands~\cite{Rusek2013}.

One way to improve user separability without increasing the antenna count is to enlarge the effective aperture via sparse or distributed antenna placement. A larger aperture can improve angular resolution and reduce spatial correlation, albeit with potential drawbacks such as elevated sidelobes and, for (quasi-)uniform sparse layouts, grating lobes~\cite{GiDuck2007,Bjornson2019c}. For example, sparse uniform planar array configurations can realize a larger effective aperture using the same number of antenna elements. Other array geometries, such as non-uniform arrays, can also be designed to better exploit spatial degrees of freedom by tailoring the array response to optimize system metrics (e.g., data rate)~\cite{Amitay1968,Zhou2013,NULA2024}. Nevertheless, while signal processing (e.g., precoding/combining) can adapt to time-varying channels and user scheduling decisions, these fixed physical array geometries cannot change in response to dynamically varying propagation conditions and environments.

In contrast, movable antennas (MAs) have recently emerged as a promising extension of multi-antenna systems that can dynamically adapt the array geometry (and thus spatial selectivity) to the current user population and their propagation environment~\cite{Bjornson2019c,Ma2024,Ding2025}. By enabling the physical repositioning of antenna elements, MA-enabled arrays can better exploit spatial diversity and mitigate inter-user interference. Related concepts include fluid antenna systems~\cite{New2026}, which realize spatial reconfigurability via port/position selection and can be implemented using liquid-based structures or reconfigurable pixel antennas, as well as reconfigurable metasurfaces~\cite{irs1,irs2} that electronically tune their electromagnetic response. While differing in implementation, these architectures share the same fundamental principle as MA systems in introducing additional degrees of freedom through reconfigurable antenna placement, realized either through mechanical positioning (e.g., motors~\cite{movgeneral}) or reconfigurable pixel antennas~\cite{REMAA25}. The additional degree of freedom introduced by antenna mobility allows such systems to achieve performance gains beyond those of fixed uniform arrays, even with fewer antenna elements~\cite{xiao2023multiuser}. Hence, MAs have the potential to enable future MIMO systems that can serve more users without needing to increase the number of antennas~\cite{bjornson2026b}.

However, this flexibility of MAs comes at a non-trivial cost: practical MA implementation requires dedicated mechanical positioning hardware (e.g., motors or actuators), precise calibration, and accurate channel state information to realize meaningful performance gains, all of which add hardware complexity, power consumption, and latency overhead relative to fixed arrays. These practical challenges suggest that \emph{MAs are not universally superior}: e.g., in scenarios where the channel offers limited temporal and spatial variability, such as sparse or line-of-sight dominated environments with few resolvable paths and limited user mobility, the additional degrees of freedom introduced by antenna mobility may not translate into substantial gains over a well-designed fixed array, making the added complexity hard to justify. \emph{Consequently, understanding when and by how much MAs outperform fixed arrays across different channel conditions, array geometries, and system configurations becomes essential to assess their true potential.}

Several recent works have investigated movable antennas in different communication scenarios. The channel capacity of point-to-point MIMO systems aided by MAs was characterized in~\cite{gao2023mimo}, among other related works. The authors in~\cite{gao2023mimo} specifically considered a narrowband flat fading model and derived the achievable capacity under optimal antenna positioning, demonstrating notable capacity gains compared to a fixed antenna scheme. In~\cite{yao2024hard}, narrowband downlink MIMO systems with both fixed and movable antennas were further evaluated under practical hardware impairments. The study quantified the impact of non-ideal transceiver components on MA-enabled systems and showed that, despite hardware distortions, MAs can still provide performance improvements over conventional fixed-antenna deployments.

More recently, a few studies have explored wideband MA-enabled systems for single-antenna orthogonal frequency division multiplexing (OFDM) transmission. In~\cite{hong2025fluidantennaempowering5g}, a point-to-point single-antenna OFDM system with a fluid antenna was studied. The authors analyzed port selection strategies under frequency-selective fading and demonstrated that OFDM–fluid antenna systems achieve higher performance gains in relatively narrow bandwidth regimes, where highly dynamic channels limit the resolution and effectiveness of port selection. In~\cite{Zhu2024a}, a SISO system operating over frequency-selective fading channels was considered. The authors showed that MA-assisted systems achieve greater performance gains in scenarios with small delay spreads, typical of indoor environments. Subsequently,~\cite{Zhu2025a} extended this analysis by proposing a field-response-based channel model applicable to both narrowband and wideband systems. Their results indicated that, although MAs offer benefits in wideband systems, the relative gains are generally smaller than those observed in narrowband settings.

Despite many promising results reported to date, prior studies on MA systems primarily focus on specific system parameters: for instance, idealized narrowband settings, fixed receiver structures, line-of-sight (LoS) propagation environments~\cite{Shao20256dma}, and exclusively uplink~\cite{UplinkDing26,UplinkHu25} or downlink transmission~\cite{DownlinkChen26,cheng2023sumrate}. Moreover, most prior works do not provide a comprehensive analysis of practical wideband systems that incorporate hardware impairments, nonlinear processing, and different duplexing strategies (time division duplex (TDD) vs.\ frequency division duplex (FDD)). As a result, the robustness and general performance gains of MA-enabled systems under realistic wireless deployments remain insufficiently understood.

\subsection{Main Contributions}
Motivated by the above-mentioned gaps, this paper investigates MA-enabled systems in a comprehensive and practical multi-user wideband OFDM framework. We develop analytical models for both uplink and downlink transmissions under linear and nonlinear processing, explicitly accounting for hardware impairments and considering characteristics of both TDD and FDD operation.

The main contributions of this paper are as follows:
\begin{itemize}

    \item To the best of our knowledge, this is the first work to derive closed-form uplink and downlink sum-rate expressions for MA-assisted wideband multi-user MIMO systems with hardware impairments. We consider linear (i.e., minimum mean square error (MMSE)) and non-linear (i.e., successive interference cancelation (SIC) for uplink and dirty paper coding (DPC) for downlink) processing. The most closely related work~\cite{Zhu2025a} assumes ideal hardware and does not derive closed-form uplink and downlink sum-rate expressions under linear and non-linear multi-user processing. In contrast, our work explicitly incorporates hardware impairments, establishes distortion-aware UL/DL duality, and provides analytical sum rate characterizations for wideband multi-user MIMO systems under both TDD and FDD operation.

    \item With the use of the developed analytical evaluation framework, we determine the performance gains achieved by movable antennas in different deployment scenarios, hardware constraints, and algorithms implemented in the network. In particular, we characterize the impact of user load, channel conditions, and duplexing mode on the MA effectiveness. In particular, we compare the performance levels offered by MAs to those achievable with fixed planar array configurations, such as compact uniform planar array, sparse uniform planar array, and staggered uniform rectangular arrays. Whenever applicable, we also evaluate the gap between the performance levels achievable with MAs and the theoretical upper bounds (e.g., zero-interference scenario).
    
    \item Leveraging the above analytical and numerical results, we identify the operational regimes in which movable antennas provide substantial performance gains over fixed arrays. We particularly demonstrate that antenna mobility yields the largest gains in LoS-dominant scenarios with strong multi-user interference, while offering limited benefits in lightly loaded or distortion-limited regimes. These findings provide practical design guidelines for the deployment of MA systems.

\end{itemize}

The remainder of the paper is organized as follows. Section~\ref{sec:system_model} introduces the multi-user OFDM wideband system. Sections~\ref{sec:uplink} and \ref{sec:downlink} derive the closed form expressions for the uplink and downlink sumrate under linear and non-linear processing. Section~\ref{sec:problem-formulation} formulates the sum rate maximization problem and discusses the optimization of antenna locations for the MA system. Section~\ref{sec:results} provides the numerical performance evaluation in different scenarios and highlights the main findings. Finally, section~\ref{Sect_conclusion} concludes the paper.

\textit{Notation:} Scalars, vectors, and matrices are denoted by lowercase, bold lowercase, and bold uppercase letters, respectively. The Hermitian transpose is denoted by $(\cdot)^{\Htran}$ and expectation by $\mathbb{E}\{\cdot\}$. The transpose and imaginary unit are denoted by $(\cdot)^{\Ttran}$ and $\imaginary$, respectively. For a complex scalar $x$, $|x|$ denotes its modulus and $|\cdot|^2$ the squared Euclidean norm. The notation $a \triangleq b$ denotes equality by definition. 

\section{System Model}\label{sec:system_model}
\vspace{-0.5mm}

We consider a multi-user MIMO OFDM system with $K$ single-antenna users and a BS with an array of $M$ movable antennas, whose positions are $ \vect{p}_m \in \mathbb{R}^3$ for $m=1,\ldots,M$. The origin of the coordinate system is at the center of the array. We gather these position vectors in the matrix $\vect{P} = [ \vect{p}_1, \ldots,  \vect{p}_M] \in \mathbb{R}^{3 \times M}$, which will be optimized in this paper under constraints on the region where the antennas can be moved.
If a plane wave impinges on the BS from the azimuth angle-of-arrival (AOA) $\varphi$ and elevation AOA $\theta$, the array response vector is given as~\cite[Sec.~7.3.1]{massivemimobook}
\begin{equation} 
\vect{a}_{\vect{P}}(\varphi,\theta) = \begin{bmatrix} e^{\imaginary \, \vect{p}_1^{\Ttran}\vect{k}(\varphi,\theta) } & \ldots & e^{\imaginary \,\vect{p}_M^{\Ttran}\vect{k}(\varphi,\theta) }
\end{bmatrix}^{\Ttran},
\end{equation}
which depends on the position matrix $\vect{P}$ and the wave vector
\begin{equation} 
\vect{k}(\varphi,\theta) = \frac{2\pi}{\lambda} \begin{bmatrix}\cos(\varphi)  \cos(\theta)  \\  \sin(\varphi) \cos(\theta)\\ \sin(\theta) \end{bmatrix}.
\end{equation}

We use a geometric channel model to enable the optimization of $\vect{P}$. The channel from the $i$-th user has $\Ncl$ far-field paths, where the $n$th path is determined by: 1) the amplitude $\alpha_{i,n} \geq 0$; 2) the time delay $\tau_{i,n} \geq 0$; 3) the azimuth AOA $\varphi_{i,n} \in [-\pi,\pi]$; and 4) the elevation AOA $\theta_{i,n} \in [-\pi/2,\pi/2]$. Both the LoS path and the non-LoS (NLoS) paths are illustrated in Fig.~\ref{fig:uplink_multipath}.

\begin{figure*}[t!]
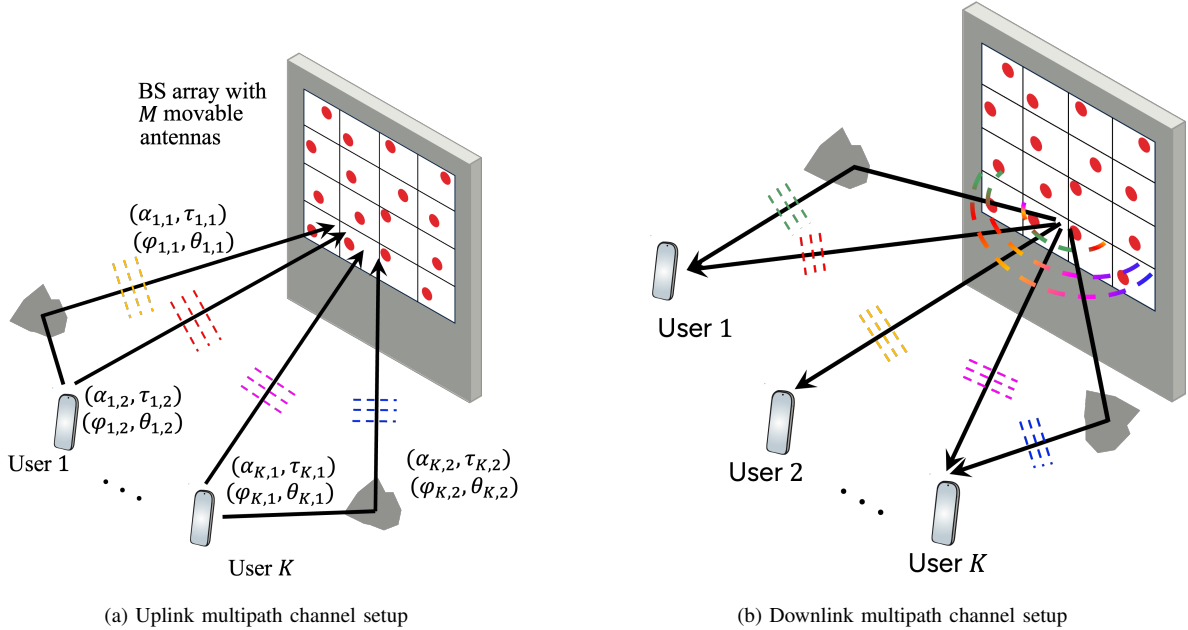

    \centering
    \subfloat[Uplink multipath channel setup\label{fig:uplink_multipath}]{
        \begin{overpic}[width=0.4\columnwidth,tics=10]{Figures/uplink_sys.png}
        \end{overpic}}
    \hspace{1cm}
    \subfloat[Downlink multipath channel setup\label{fig:downlink_multipath}]{
        \begin{overpic}[width=0.4\columnwidth,tics=10]{Figures/downlink_sys.png}
        \end{overpic}}

    \caption{Illustration of the considered multi-suer multipath channel setups.}
\end{figure*}

The OFDM waveform uses $S$ subcarriers with a subcarrier spacing of $\Delta$ and a pulse-shaping filter $f(t)$ that is only non-zero for $t \in [-1,1]$.\footnote{One example is the triangle function $f(t) = 1-|t|$ for $t \in [-1,1]$ (and $0$ elsewhere), which is obtained by using box functions at the transmitter and receiver. Another practical example is a time-windowed raised-cosine filter.}
The channel from $i$-th user then becomes a finite impulse response filter with the taps~\cite[Ch.~7]{bookEmil}
\vspace{-0.5mm}
\begin{equation} \label{eq:hk-multitap}
\vect{h}_i[\ell] =  \sum_{n=1}^{\Ncl} b_{i,n}[\ell] \vect{a}_{\vect{P}}(\varphi_{i,n},\theta_{i,n}), \quad \ell=0,\ldots,T,
\end{equation}
where the scalar coefficients $b_{i,n}[\ell] \in \mathbb{C}$ are given by
\vspace{-0.5mm}
\begin{equation}
b_{i,n}[\ell] = \alpha_{i,n} e^{- \imagunit 2 \pi \lambda (\tau_{i,n} - \eta)/c} f \left(\ell+S\Delta(\eta-\tau_{i,n}) \right),
\end{equation}
the speed of light is denoted by $c$, the time synchronization coefficient at the receiver is $\eta = \min_{i,n} \tau_{i,n}$  (matched to the fastest path), the number of delay taps is $T = \left\lceil S\Delta (\max_{i,n} \tau_{i,n} - \eta) \right\rceil$, and $\lceil\cdot \rceil$ is the ceiling operation. A specific feature of this new OFDM channel model is that we can both vary the bandwidth $S\Delta$ (by changing the number of subcarriers $S$) and the antenna positions (by changing $\vect{P}$) while maintaining a spatially consistent propagation scenario.
The number of taps increases with the bandwidth, making the individual paths more distinguishable.

\section{Uplink Rates with Hardware Distortion}\label{sec:uplink}

In this section, we derive novel uplink rate expressions for OFDM systems with hardware impairments and movable antennas, accounting for different receiver combining schemes.
The received uplink signal $\bar{\vect{y}}[\nu] \in \mathbb{C}^M$ on subcarrier $\nu$ is
\vspace{-0.5mm}
\begin{equation} \label{eq:received-signal-MIMO-OFDM}
\bar{\vect{y}}[\nu] =  \sum_{i=1}^{K} \bar{\vect{h}}_i [\nu] \bar{\chi}_i[\nu] + \bar{\vect{n}}[\nu], \quad \nu =0,\ldots,S-1,
\end{equation}
where $\bar{\chi}_i[\nu]$ is the signal transmitted by the $i$-th user with power $\rho_i[\nu]$ on subcarrier $\nu$, $\bar{\vect{n}}[\nu] \sim \CN ( \vect{0}, \sigma^2 \vect{I}_M)$ is independent noise, and the channel from the $i$-th user obtain as
\begin{equation} \label{eq:H-frequency-domain-multipath}
\bar{\vect{h}}_i[\nu] = \sum_{n=1}^{\Ncl}  \left( \sum_{\ell = 0}^{T} b_{i,n}[\ell]  e^{-\imaginary 2 \pi \ell \nu /S}\right)
\vect{a}_{\vect{P}}(\varphi_{i,n},\theta_{i,n}), 
\end{equation}
by taking the discrete Fourier transform (DFT) of \eqref{eq:hk-multitap}.

Typical hardware components in wireless transceivers, such as power amplifiers, local oscillators, and analog-to-digital converters, introduce nonlinearities and imperfections. These give rise to residual distortions such as PA nonlinearities, phase noise, and finite-resolution quantization~\cite{HFimpairment,SchenkTim2008RIiH,ValkamaMikko2010RICf}, which degrade the communication performance. To quantify these effects in their totality, we will use the established error vector magnitude (EVM) metric~\cite[Sec.~6.1]{massivemimobook}.
We model the transmitted signal by $i$-th user on subcarrier $\nu$ as
\begin{align} \nonumber
\bar{\chi}_i[\nu] = &\sqrt{\rho_i[\nu]} \left(\sqrt{1-\EVM^2} d_i[\nu] + \epsilon_i[\nu]\right)\\ 
= &\sqrt{\rho_i[\nu]} \left(\sqrt{\kappa} d_i[\nu] + \epsilon_i[\nu]\right), \label{eq:uplink-signal}
\end{align}
where $d_i[\nu]$ is the distortionless unit-variance data signal, $\mathbb{E} \{ |d_i[\nu]|^2\} =1$, $\epsilon_i[\nu]$ is the uncorrelated additive distortion noise with variance $\EVM^2$, and $\mathbb{E} \{ |\epsilon_i[\nu]|^2\} =\EVM^2=
1-\kappa$, 
where $\kappa \triangleq 1-\EVM^2$ denotes the fraction of useful signal power. The total transmit power is $\mathbb{E} \{ |\bar{\chi}_i[\nu]|^2\} =  \rho_i[\nu] ( 1-\EVM^2 +\EVM^2) = \rho_i[\nu]$ regardless of the EVM, but the coefficient $\EVM \in [0,1]$ determines the fraction of distortion.

Substituting \eqref{eq:uplink-signal} into \eqref{eq:received-signal-MIMO-OFDM},
the received signal at the BS on subcarrier $\nu$ can be expressed as
\begin{align}
   \bar{\vect{y}}[\nu] =  \sum_{i=1}^K\sqrt{\rho_i[\nu]} \bar{\vect{h}}_i [\nu] \left(\sqrt{\kappa} d_i[\nu] + \epsilon_i[\nu]\right) + \bar{\vect{n}}[\nu].
\end{align}

In practice, there will also be impairments in the BS hardware, but it typically drowns in the distortions generated by the users~\cite{massivemimobook,Bjornson2019b}, because the multi-antenna processing spreads it out spatially and the BS usually has higher-grade hardware. To keep the analysis tractable in this paper, we will not include the BS hardware impairments in the system model.

\subsection{Linear receive combining}

We first consider the case where the BS uses linear receiver processing.
For the $k$-th user, the BS applies the receive combining vector $\vect{w}_k[\nu] \in \mathbb{C}^M$ on subcarrier $\nu$ such that
\begin{align}\nonumber
   \vect{w}_k[\nu]^{\Htran}\bar{\vect{y}}[\nu] =& \sqrt{\kappa\rho_k[\nu]} \vect{w}^{\Htran}_k[\nu]\bar{\vect{h}}_k[\nu]d_k[\nu]+\\ \nonumber&\sqrt{\kappa}\sum_{i \neq k} \sqrt{\rho_i[\nu]}\vect{w}^{\Htran}_k[\nu]\bar{\vect{h}}_i[\nu]d_i[\nu]\\ &+\sum_{i=1}^K \sqrt{\rho_i[\nu]}\vect{w}^{\Htran}_k[\nu]\bar{\vect{h}}_i[\nu]\epsilon_i[\nu]+\vect{w}^{\Htran}_k[\nu]\bar{\vect{n}}.
\end{align}
If the BS decodes the signal from the $k$-th user at subcarrier $\nu$ using linear receive combining and treats multiuser interference, distortion noise, and thermal noise as additive Gaussian noise, then an achievable rate is given by (see,~\cite[Ch. 2]{Tse_Viswanath_2005})
\begin{equation}
    R_k^{\mathrm{UL,lin}}= \log_2(1+\mathrm{SINR}_k^{\mathrm{UL,lin}}[\nu]).
\end{equation}
where $\mathrm{SINR}_k^{\mathrm{UL,lin}}[\nu]$ denotes the signal-to-interference-plus-noise-ratio (SINR). With an arbitrary combining vector $\vect{w}_k[\nu]$, the SINR equals \eqref{SINR_UL_LIN}, given at the top of next page.

\begin{figure*}

\begin{align}\nonumber
  \mathrm{SINR}_k^{\mathrm{UL,lin}}[\nu]=&\frac{\kappa\rho_k[\nu]|\vect{w}_{k}^{\Htran}[\nu]\bar{\vect{h}}_k[\nu]|^2}{\kappa \sum_{i\neq k}\rho_i[\nu]|\vect{w}_k^{\Htran}[\nu] \bar{\vect{h}}_i[\nu]|^2+(1-\kappa)\sum_{i= 1}^{K}\rho_i[\nu]|\vect{w}_k^{\Htran}[\nu]\bar{\vect{h}}_i[\nu]|^2+\sigma^2\|\vect{w}_k[\nu]\|^2}\\&\label{SINR_UL_LIN}
  = \frac{\kappa\rho_k[\nu]|\vect{w}_{k}^{\Htran}[\nu]\bar{\vect{h}}_k[\nu]|^2}{ \sum_{i\neq k}\rho_i[\nu]|\vect{w}_k^{\Htran}[\nu] \bar{\vect{h}}_i[\nu]|^2+(1-\kappa)\rho_k[\nu]|\vect{w}_k^{\Htran}[\nu]\bar{\vect{h}}_k[\nu]|^2+\sigma^2\|\vect{w}_k[\nu]\|^2}
\end{align}
\end{figure*}
Assuming perfect channel state information (to focus on the ultimate performance gains provided by MAs), and averaging the corresponding data rates over users and subcarriers, and considering the aggregate uncorrelated disturbance (multi-user interference, distortion and thermal noise) as Gaussian noise, the achievable sum rate is  
\begin{equation}\label{sum_rate_lin}
    R_\Sigma^{\mathrm{UL,lin}}=\frac{1}{S}\sum_{\nu=0}^{S-1}\sum_{k=1}^K \log_2(1+\mathrm{SINR}_k^{\mathrm{UL,lin}}[\nu]).
\end{equation}
The choice of the receive combining vector critically impacts the achieved rate.
Since each vector only affects one SINR, the optimal vector maximizes $\mathrm{SINR}_k^{\mathrm{UL,lin}}[\nu]$ in \eqref{SINR_UL_LIN} is the MMSE combiner~\cite{flex24}. Then, the SINR in \eqref{SINR_UL_LIN} can be written as
\begin{align}
  \mathrm{SINR}_k^{\mathrm{UL,lin}}[\nu]=&\frac{\vect{w}_{k}^{\Htran}[\nu](\kappa\rho_k[\nu]\bar{\vect{h}}_k[\nu]\bar{\vect{h}}_k^{\Htran}[\nu])\vect{w}_{k}[\nu]}{\vect{w}^{\Htran}_{k}[\nu]\vect{Q}_{i+n}^k[\nu]\vect{w}_{k}[\nu]},
\end{align}
where the disturbance covariance matrix is stated in \eqref{Q_I_N} on the next page.

\begin{figure*}
    \begin{align}\label{Q_I_N}
 \vect{Q}_{i+n}^k[\nu] &=\sum_{i\neq k}\rho_i[\nu]\bar{\vect{h}}_i[\nu]\bar{\vect{h}}^{\Htran}_i[\nu]+(1-\kappa)\rho_k[\nu]\bar{\vect{h}}_k[\nu]\bar{\vect{h}}^{\Htran}_k[\nu]+\sigma^2\vect{I}_M
\end{align}
\hrulefill
\end{figure*}
The SINR is therefore a generalized Rayleigh quotient with respect to $\vect{w}_k[\nu]$ which is maximized by 
\begin{align}
\vect{w}_k^{\mathrm{MMSE}}[\nu] = \vect{Q}_{i+n}^k[\nu]^{-1} \bar{\vect{h}}_k[\nu].
\end{align}

\subsection{Non-linear decoding with interference cancellation}
While linear receivers are attractive due to their low complexity, treating multiuser interference as noise is generally suboptimal. To characterize the ultimate uplink performance, we now consider optimal non-linear decoding via successive interference cancellation (SIC). SIC achieves the sum capacity of the multiple-access channel under a sum-power constraint. Hence, it serves as a theoretical optimal nonlinear benchmark, allowing us to characterize the fundamental performance limits of the uplink MA-assisted system under hardware distortion. We note that the received signal in \eqref{eq:received-signal-MIMO-OFDM} has the same form as a point-to-point MIMO system, where it is represented as
\begin{equation}
\bar{\vect{y}}[\nu] = \sqrt{\kappa}\bar{\vect{H}}_\rho[\nu]\vect{d}[\nu] + \underbrace{\bar{\vect{H}}_\rho[\nu]\vect{\epsilon}[\nu] + \bar{\vect{n}}[\nu]}_{\text{effective noise } \tilde{\vect{n}}[\nu]}.
\end{equation}
Here $\bar{\vect{H}}_\rho[\nu] = [\sqrt{\rho_1[\nu]}\bar{\vect{h}}_1[\nu], \sqrt{\rho_2[\nu]}\bar{\vect{h}}_2[\nu], \ldots, \sqrt{\rho_K[\nu]}\bar{\vect{h}}_K[\nu]]$, and the effective noise $\tilde{\vect{n}}[\nu]$ has the covariance matrix
\begin{align}
\vect{Q}[\nu] = \mathbb{E}\left[{\tilde{\vect{n}}[\nu]\tilde{\vect{n}}[\nu]^{\Htran}}\right] = (1-\kappa)\bar{\vect{H}}_\rho[\nu]\bar{\vect{H}}_\rho[\nu]^{\Htran} + \sigma^2\vect{I}_M. \label{eq:Q-expression}
\end{align}
With SIC, the uplink sum capacity equals the mutual information of the equivalent MIMO channel. Applying the worst-case uncorrelated additive noise theorem~\cite{Hassibi2003a}, the achievable sum rate for Gaussian signaling on subcarrier $\nu$  becomes
\begin{align} \nonumber
R [\nu] &= \log_2 \det \left( \vect{I}_M + \kappa \vect{Q}^{-1}[\nu] \bar{\vect{H}}_\rho[\nu] 
\bar{\vect{H}}_\rho^{\Htran}[\nu] \right) \\ \nonumber
&=\log_2 \det \left( \left[\vect{Q}[\nu] + \kappa  \bar{\vect{H}}_\rho[\nu] 
\bar{\vect{H}}_\rho^{\Htran}[\nu] \right]\vect{Q}^{-1}[\nu] \right) \\ \nonumber
&=\log_2 \det \left( \vect{Q}[\nu] + \kappa  \bar{\vect{H}}_\rho[\nu] 
\bar{\vect{H}}_\rho^{\Htran}[\nu] \right)  -\log_2 \det\left( \vect{Q}[\nu]\right) \\ \nonumber
&= \log_2 \det \left( \frac{ \kappa}{\sigma^2}  \bar{\vect{H}}_\rho[\nu] \bar{\vect{H}}_\rho^{\Htran}[\nu] + \frac{1}{\sigma^2} \vect{Q}[\nu] \right) \\
& \quad - \log_2\det \left( \frac{1}{\sigma^2} \vect{Q}[\nu] \right),
\end{align}
where we utilized the facts that $\det(\vect{AB})=\det(\vect{A})\det(\vect{B})$ and $\log_2(ab)=\log_2(a)+\log_2(b)$. 
Using \eqref{eq:Q-expression}, we can simplify the expression as
\begin{align} \nonumber
R [\nu] &= \log_2 \det \left( \frac{ \kappa + (1-\kappa)}{\sigma^2} \bar{\vect{H}}_\rho[\nu]\bar{\vect{H}}_\rho^{\Htran}[\nu] + \vect{I}_M \right) \\ \nonumber
&\quad - \log_2\det \left( \frac{(1-\kappa)}{\sigma^2} \bar{\vect{H}}_\rho[\nu]\bar{\vect{H}}_\rho^{\Htran}[\nu] + \vect{I}_M \right) \\ \nonumber
&= \log_2 \det \left( \vect{I}_M + \frac{1}{\sigma^2} \bar{\vect{H}}_\rho[\nu]\bar{\vect{H}}_\rho^{\Htran}[\nu] \right) \\
&\quad - \log_2 \det \left( \vect{I}_M + \frac{(1-\kappa)}{\sigma^2} \bar{\vect{H}}_\rho[\nu]\bar{\vect{H}}_\rho^{\Htran}[\nu] \right). \label{eq:uplink-sum-rate-SIC}
\end{align}
Note that $\bar{\vect{H}}_\rho[\nu]$ can be written as $\bar{\vect{H}}_\rho[\nu]=\bar{\vect{H}}[\nu]\vect{D}_\rho^{1/2}[\nu]$, where $\bar{\vect{H}}[\nu] = [\bar{\vect{h}}_1[\nu], \ldots, \bar{\vect{h}}_K[\nu]]$ is the full channel matrix and $\vect{D}_\rho[\nu]= \diag(\rho_1[\nu], \ldots, \rho_K[\nu])$ is the diagonal power allocation matrix, which gives $\bar{\vect{H}}_\rho[\nu]\bar{\vect{H}}_\rho^{\Htran}[\nu]=\bar{\vect{H}}[\nu]\vect{D}_\rho[\nu]\bar{\vect{H}}^{\Htran}[\nu]$. Averaging \eqref{eq:uplink-sum-rate-SIC} over the $S$ subcarriers, we obtain the average sum rate

\begin{align} \nonumber
R_{\Sigma}^{\mathrm{UL,SIC}} &=  \frac{1}{S} \sum_{\nu=0}^{S-1} \log_2 \det \left( \vect{I}_M + \frac{1}{\sigma^2} \bar{\vect{H}}[\nu] \vect{D}_\rho[\nu] \bar{\vect{H}}^{\Htran}[\nu] \right)-\\
&\quad \frac{1}{S} \sum_{\nu=0}^{S-1} \log_2 \det \left( \vect{I}_M + \frac{ (1-\kappa)}{\sigma^2} \bar{\vect{H}}[\nu]\vect{D}_\rho[\nu] \bar{\vect{H}}^{\Htran}[\nu] \right).\label{eq:sumrate-expression}
\end{align}

This new expression has two terms, where the first represents the sum rate with ideal hardware and the second is the penalty imposed by hardware impairments. 

\begin{proposition} \label{cor1}
Suppose the power allocation is selected as $\vect{D}_\rho[\nu] =\rho\vect{A}[\nu]$, where $\vect{A}[\nu]=\diag(a_1[\nu],\ldots,a_K[\nu])$ is a fixed matrix and $\rho$ is a parameter.
If $\bar{\vect{H}}[\nu]$ has rank $K$ and $\EVM>0$, then
\begin{align} \label{eq:upper-limit_evm}
\lim_{\rho \to \infty} R_{\Sigma} =  K \log_2 \left(\frac{1}{\EVM^2} \right).
\end{align}
\end{proposition}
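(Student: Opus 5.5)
The plan is to take the closed-form SIC sum rate \eqref{eq:sumrate-expression} as $R_\Sigma$ and study each subcarrier term separately. Since the average over $\nu$ is a finite sum, it suffices to show that every per-subcarrier difference converges to $K\log_2(1/\EVM^2)$. Fix $\nu$ and write $\vect{G}[\nu]=\bar{\vect{H}}[\nu]\vect{A}[\nu]\bar{\vect{H}}^{\Htran}[\nu]$. Then $\bar{\vect{H}}[\nu]\vect{D}_\rho[\nu]\bar{\vect{H}}^{\Htran}[\nu]=\rho\vect{G}[\nu]$. I will assume $a_k[\nu]>0$ for all $k$; this is implicit, since otherwise a user receives no power and the rank argument fails. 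Under this assumption, $\vect{G}[\nu]$ is Hermitian positive semidefinite with rank exactly $K$, because $\bar{\vect{H}}[\nu]$ has full column rank $K$ and $\vect{A}[\nu]$ is positive definite.

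The key step is the eigendecomposition. Let $\lambda_1[\nu],\ldots,\lambda_K[\nu]>0$ be the nonzero eigenvalues of $\vect{G}[\nu]$; the remaining $M-K$ eigenvalues are zero. Since $\vect{I}_M + c\,\vect{G}[\nu]$ shares the eigenvectors of $\vect{G}[\nu]$, its determinant is $\prod_{k=1}^K (1+c\lambda_k[\nu])$. The per-subcarrier rate is then
\begin{align}
R[\nu] = \sum_{k=1}^{K} \log_2 \frac{1+\frac{\rho}{\sigma^2}\lambda_k[\nu]}{1+\frac{\rho(1-\kappa)}{\sigma^2}\lambda_k[\nu]}.
\end{align}
An equivalent route is Sylvester's identity $\det(\vect{I}_M+\vect{X}\vect{Y})=\det(\vect{I}_K+\vect{Y}\vect{X})$, which reduces both terms to $K\times K$ determinants involving the invertible Gram matrix $\vect{A}^{1/2}\bar{\vect{H}}^{\Htran}\bar{\vect{H}}\vect{A}^{1/2}$. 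Either way, the $M-K$ zero eigenvalues contribute $\log_2 1 = 0$. This is exactly where the rank-$K$ hypothesis matters: no positive eigenvalue is lost, so all $K$ terms survive.

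Next, I divide the numerator and denominator of each ratio by $\rho$. As $\rho\to\infty$, each ratio tends to $\lambda_k[\nu]/((1-\kappa)\lambda_k[\nu]) = 1/(1-\kappa)$. This uses $\lambda_k[\nu]>0$ and $1-\kappa=\EVM^2>0$, which is exactly the hypothesis $\EVM>0$. Continuity of $\log_2$ then gives $R[\nu]\to K\log_2(1/\EVM^2)$. Averaging over the $S$ subcarriers yields \eqref{eq:upper-limit_evm}.

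The computation is routine, so the main obstacle is conceptual rather than technical: justifying that all $K$ eigenvalues of $\vect{G}[\nu]$ are strictly positive. This needs full column rank of $\bar{\vect{H}}[\nu]$ together with positive diagonal entries in $\vect{A}[\nu]$. I will also remark on the boundary case. If $\EVM=0$, the second term of \eqref{eq:sumrate-expression} vanishes and the rate grows without bound like $K\log_2\rho$, which explains why the hypothesis $\EVM>0$ is needed.
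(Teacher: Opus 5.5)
Your proposal is correct and follows essentially the same route as the paper: factor $\rho$ out of $\bar{\mathbf{H}}[\nu]\mathbf{A}[\nu]\bar{\mathbf{H}}^{\Htran}[\nu]$, diagonalize it, write each log-determinant as a sum over its $K$ nonzero eigenvalues, and let each ratio tend to $1/(1-\kappa)=1/\mathrm{EVM}^2$. Your explicit assumption $a_k[\nu]>0$ is one the paper uses only implicitly, when it asserts $K$ nonzero eigenvalues, and it is genuinely needed for the stated limit.
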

\begin{IEEEproof}
 With the given power allocation, the term $\bar{\vect{H}}[\nu] \vect{D}_\rho[\nu] \bar{\vect{H}}^{\Htran}[\nu]$ becomes $\rho\bar{\vect{H}}[\nu] \vect{A}[\nu] \bar{\vect{H}}^{\Htran}[\nu]$. Introducing the notation $\vect{B}[\nu]=\bar{\vect{H}}[\nu] \vect{A}[\nu] \bar{\vect{H}}^{\Htran}[\nu]$, the sum rate becomes
\begin{align*}
    R_{\Sigma} &= \frac{1}{S} \sum_{\nu=0}^{S-1} \log_2 \det \left(  \vect{I}_M +  \frac{\rho}{\sigma^2}\vect{B}[\nu] \right) \\
& \quad -  \frac{1}{S}  \sum_{\nu=0}^{S-1}\log_2 \det \left(\vect{I}_M +  \frac{(1-\kappa)\rho}{\sigma^2 } \vect{B}[\nu] \right).
\end{align*}
Let $\lambda_1[\nu],\ldots,\lambda_K[\nu]$ denote the non-zero eigenvalues of $\vect{B}[\nu]=\vect{U}[\nu]\vect{\Lambda}[\nu]\vect{U}^{\Htran}[\nu]$, where $\vect{\Lambda}[\nu]=\diag(\lambda_1[\nu],\ldots,\lambda_K[\nu])$.
Utilizing determinant property,
\begin{align}
\det(\vect{I}+c\vect{U}[\nu]\vect{\Lambda}[\nu]\vect{U}^{\Htran}[\nu])=\det(\vect{I}+c\vect{\Lambda})=\prod_{k=1}^K (1+c\lambda_k[\nu]),
\end{align}
we obtain
\begin{align}
\notag
 R_{\Sigma} &= \frac{1}{S}  \sum_{\nu=0}^{S-1}\sum_{k=1}^K \log_2  \left(  1 +  \frac{\rho}{\sigma^2}\lambda_k[\nu] \right) \\ 
& \notag \quad -  \frac{1}{S}  \sum_{\nu=0}^{S-1}\sum_{k=1}^K\log_2 \left(1 +  \frac{(1-\kappa)\rho}{\sigma^2 } \lambda_k[\nu] \right) \\
&=  \frac{1}{S}  \sum_{\nu=0}^{S-1}\sum_{k=1}^K \log_2 \left(  
\frac{1 +  \frac{\rho}{\sigma^2}\lambda_k[\nu]}{1 +  \frac{(1-\kappa)\rho}{\sigma^2 } \lambda_k[\nu]}
 \right).
\label{R_sum}
\end{align}
We can now take the limit and obtain

\begin{align}
 \lim_{\rho \to \infty} R_{\Sigma} &= \frac{1}{S} \sum_{\nu=0}^{S-1} \sum_{k=1}^K
 \log_2 \!\left(  
\frac{\frac{\lambda_k[\nu]}{\sigma^2}}{\frac{(1-\kappa) \lambda_k[\nu]}{\sigma^2 } }
 \right)=K\log_2 \!\left(\frac{1}{1-\kappa}\right),
\end{align}
resulting in \eqref{eq:upper-limit_evm}. This completes the proof.
\end{IEEEproof}

This proposition demonstrates that there is an upper bound on the achievable rate when the transmit power grows large.
The upper bound depends on the EVM, but is independent of the BS antenna positions; thus, their positions only determines the performance at finite SNRs. 

\section{Downlink Rates with Hardware Distortion}\label{sec:downlink}

In the previous section, we derived the uplink rates with linear and non-linear processing. We now consider the corresponding downlink transmission where the BS employs either linear precoding or DPC. 

\subsection{Linear transmit precoding}
When the BS transmits to $K$ users using linear precoding, the transmitted signal at subcarrier $\nu$ can be denoted as
\begin{equation}
  \vect{x}[\nu]=\mathbf{P}[\nu]\vect{d[\nu]}=\sum_{i=1}^K \vect{p}_i[\nu] d_i[\nu],
\end{equation}
where $\vect{x}[\nu] \in \mathbb{C}^{M}$, $\mathbf{P}[\nu] \in \mathbb{C}^{M \times K}$ and $\vect{d}[\nu] \in \mathbb{C}^{K }$. Here, $\mathbf{P}[\nu]=[\vect{p}_1[\nu],\ldots,\vect{p}_i[\nu],\ldots,\vect{p}_K[\nu]]\in \mathbb{C}^{M \times K}$ is the precoding matrix, $\vect{p}_i[\nu]$ is the precoding vector for the $i$-th user at subcarrier $\nu$ and $\vect{d}[\nu]\in\mathbb{C}^K$ is the data vector with $\mathbb{E}\{|d_i[\nu]|^2\}=1$. The total BS transmit power is constrained as $\sum_{i=1}^K\|\vect{p}_i[\nu]\|^2\leq P_{\mathrm{tot}}$. 
Without hardware impairments, the received signal at the $k$-th user is
\begin{equation}
y_k^{\mathrm{ideal}}[\nu] = \sum_{i=1}^K \bar{\vect{h}}_k[\nu]^{\Htran}\vect{p}_i[\nu] d_i[\nu] + n_k[\nu],
\end{equation}
where $\bar{\vect{h}}_k[\nu]\in\mathbb{C}^M$ is the channel vector that was defined in the uplink and $n_k[\nu]\sim \mathcal{CN}(0,\sigma^2)$ is thermal noise. With hardware impairments at the user side (similar to the uplink), the effective received signal becomes~\cite[Ch.~6]{massivemimobook}
\begin{align}\nonumber
\bar{y}_k[\nu] =& \sqrt{\kappa}\, y_k^{\mathrm{ideal}}[\nu] + \eta_k[\nu]\\=& \nonumber
\sqrt{\kappa} \bar{\vect{h}}_k^{\Htran}[\nu]\vect{p}_k[\nu] d_k[\nu]+ \sqrt{\kappa}\sum_{i\neq k} \bar{\vect{h}}_k^{\Htran}[\nu]\vect{p}_i[\nu] d_i[\nu] \\\label{UEmodel}&+ \sqrt{\kappa} n_k[\nu]+ \eta_k[\nu]
,
\end{align}
where $\eta_k[\nu]\sim \mathcal{CN}(0,(1-\kappa)\mathbb{E}\{|y_k^{\mathrm{ideal}}[\nu]|^2\})$ is an additive distortion term that is uncorrelated with $y_k^{\mathrm{ideal}}[\nu]$, where $\mathbb{E}\{|y_k[\nu]|^2\})=\mathbb{E}\{|y_k^{\mathrm{ideal}}[\nu]|^2\})$.
When the distortion is proportional to the total received power
\begin{equation}
\mathbb{E}\{|\eta_k[\nu]|^2\} = (1-\kappa)\left(\sum_{i=1}^K |\bar{\vect{h}}_k[\nu]^{\Htran}\vect{p}_i[\nu]|^2 + \sigma^2\right).
\end{equation}

By treating the distortion and multi-user interference as worst-case uncorrelated Gaussian noise~\cite{Hassibi2003a}, the achievable rate at the $k$-th user is $\log_2(1+\mathrm{SINR}_k^{\mathrm{DL,lin}})$, where $\mathrm{SINR}_k^{\mathrm{DL,lin}}$ is given in \eqref{SINR_DL_lin}, provided at the top of the page. When $\kappa=1$ (i.e., no hardware impairments), this expression reduces to the classical SINR for downlink linear precoding~\cite{bjornson2017massive}. 

The distortion term scales with the total received power and therefore limits the achievable SINR at high SNR. The corresponding achievable sum rate is
\begin{equation}\label{sum_rate_lin_DL}
R_\Sigma^{\mathrm{DL,lin}}=\frac{1}{S}\sum_{\nu=0}^{S-1}\sum_{k=1}^K \log_2\big(1+\mathrm{SINR}_k^{\mathrm{DL,lin}}[\nu]\big).
\end{equation}
\begin{figure*}
\begin{align} \nonumber
  \mathrm{SINR}_k^{\mathrm{DL,lin}}[\nu]=&\frac{\kappa|\bar{\vect{h}}_k[\nu]^{\Htran} \vect{p}_k[\nu]|^2}{\kappa \sum_{i\neq k}|\bar{\vect{h}}_k[\nu]^{\Htran}\vect{p}_i[\nu]|^2+(1-\kappa)\left(\sum_{i= 1}^{K}|\bar{\vect{h}}_k[\nu]^{\Htran}\vect{p}_i[\nu]|^2+\sigma^2\right)+\kappa\sigma^2} \label{SINR_DL_lin}
  \\&= \frac{\kappa|\bar{\vect{h}}_k[\nu]^{\Htran}\vect{p}_k[\nu]|^2}{\sum_{i\neq k}|\bar{\vect{h}}_k[\nu]^{\Htran}\vect{p}_i[\nu]|^2+(1-\kappa)|\bar{\vect{h}}_k[\nu]^{\Htran}\vect{p}_k[\nu]|^2+\sigma^2} 
\end{align}
\hrulefill
\end{figure*}

\subsection{ Non-linear precoding with dirty paper coding}

The BS can use the DPC method to pre-cancel parts of the inter-user interference for the UEs, similarly to how SIC operates in the uplink. We consider DPC because it achieves the sum capacity of the broadcast channel under a total power constraint, provides a theoretical optimal nonlinear benchmark for the downlink, and enables a fundamental comparison with linear precoding in the presence of hardware distortion. 
It is established in~\cite{duality2003} that the downlink capacity region with a total power constraint is equal to the capacity region of the dual uplink multiple-access channel with a sum power constraint. This classical result holds for scenarios without hardware impairments, but we will show that the same duality principle applies in our system with hardware distortion.

The uplink sum rate with arbitrary transmit powers was stated in \eqref{eq:uplink-sum-rate-SIC}, and the power allocation must normally be performed under per-user power constraints.
The uplink sum capacity under a hypothetical  total uplink power constraint is
\begin{align}
\max_{\{\vect{D}_\rho[\nu]\}} & \frac{1}{S} \sum_{\nu=0}^{S-1} \Bigg[ \log_2 \det \left( \vect{I}_M + \frac{1}{\sigma^2} \bar{\vect{H}}[\nu] \vect{D}_\rho[\nu] \bar{\vect{H}}^{\Htran}[\nu] \right) \nonumber \\
& \quad - \log_2 \det \left( \vect{I}_M + \frac{(1-\kappa)}{\sigma^2} \bar{\vect{H}}[\nu]\vect{D}_\rho[\nu] \bar{\vect{H}}^{\Htran}[\nu] \right) \Bigg] \label{eq:ul_optimization} \\
\text{subject to} \quad & \vect{D}_\rho[\nu] \succeq 0, \quad \vect{D}_\rho[\nu] \text{ diagonal}, \ \forall \nu, \nonumber \\
& \sum_{\nu=0}^{S-1} \tr(\vect{D}_\rho[\nu]) \leq P_{\mathrm{tot}}. \label{eq:powerconstrating}
\end{align}
Here, $P_{\mathrm{tot}}$ denotes the total transmit power summed over all users and subcarriers, which is equivalent to the total downlink power constraint $\sum_{i=1}^K\sum_{\nu=0}^{S-1}\vect{\rho}_i[\nu] \leq P_{\mathrm{tot}}$ under the duality.
\begin{Lemma} \label{lemma:duality}
    The downlink rate with DPC equals \eqref{eq:ul_optimization}.
\end{Lemma}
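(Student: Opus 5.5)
Our plan is to reduce the downlink model with user-side distortion to an equivalent ideal broadcast channel with scaled noise. We then invoke the classical uplink--downlink duality of~\cite{duality2003} for that equivalent channel, and finally map the dual uplink back to the distortion-aware expression \eqref{eq:ul_optimization}.

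\textbf{Step 1: equivalent ideal downlink.} In \eqref{UEmodel} the received signal at user $k$ is $\sqrt{\kappa}(\bar{\vect{h}}_k^{\Htran}[\nu]\vect{x}[\nu]+n_k[\nu])+\eta_k[\nu]$. The distortion $\eta_k[\nu]$ has variance $(1-\kappa)(\mathbb{E}\{|\bar{\vect{h}}_k^{\Htran}\vect{x}|^2\}+\sigma^2)$. Treating it as worst-case uncorrelated Gaussian noise~\cite{Hassibi2003a}, as is done for \eqref{SINR_DL_lin}, we divide by $\sqrt{\kappa}$. The signal seen by user $k$ is then $\bar{\vect{h}}_k^{\Htran}\vect{x}$ plus noise of variance $\sigma^2/\kappa+\frac{1-\kappa}{\kappa}\bar{\vect{h}}_k^{\Htran}\vect{S}_x\bar{\vect{h}}_k$, where $\vect{S}_x$ is the transmit covariance. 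The key observation is that this is a \emph{signal-dependent} noise. The distortion must therefore be absorbed into the rate expression rather than into a fixed noise level.

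\textbf{Step 2: handle the signal-dependent noise.} With DPC using encoding order $\pi$, user $\pi(k)$ sees interference only from users encoded after it. Its SINR has numerator $\kappa|\bar{\vect{h}}^{\Htran}\vect{p}|^2$ and a denominator consisting of the residual interference, $(1-\kappa)$ times its own useful-plus-interference power, and $\sigma^2$. The rate of each user is therefore
\begin{equation*}
\log_2\frac{\sigma^2+\bar{\vect{h}}_k^{\Htran}\vect{S}_{\geq k}\bar{\vect{h}}_k+(1-\kappa)\bar{\vect{h}}_k^{\Htran}\vect{S}_{< k}\bar{\vect{h}}_k}{\sigma^2+\bar{\vect{h}}_k^{\Htran}\vect{S}_{>k}\bar{\vect{h}}_k+(1-\kappa)\bar{\vect{h}}_k^{\Htran}\vect{S}_{\le k}\bar{\vect{h}}_k},
\end{equation*}
where $\vect{S}_{\geq k}$, $\vect{S}_{>k}$, $\vect{S}_{<k}$ and $\vect{S}_{\leq k}$ denote the covariances of the corresponding subsets of user streams. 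We would write this rate as a difference of two ideal-hardware DPC-type rates: the rate with noise $\sigma^2$ and all streams, minus the rate obtained when all streams are scaled by $(1-\kappa)$. This mirrors the two-term structure of \eqref{eq:uplink-sum-rate-SIC}.

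\textbf{Step 3: apply duality to each term.} The MAC--BC duality of~\cite{duality2003} (or equivalently the SINR-based duality of Viswanath--Tse) states that any set of downlink SINRs achieved with DPC and total power $P$ is also achieved in the dual uplink with SIC in the reverse order and the same total power, and vice versa. We would prove this per subcarrier directly on the distortion-aware SINRs. Given precoders $\vect{p}_k$, we define uplink powers $\rho_k$ and combiners $\vect{w}_k\propto\vect{p}_k$. Requiring equal SINRs gives a linear system in the powers whose coefficient matrix is the transpose of the downlink one, because the coupling terms $|\bar{\vect{h}}_k^{\Htran}\vect{p}_i|^2$ appear transposed. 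The extra $(1-\kappa)$ self-term appears on the diagonal of both systems, so summing the equations yields equal total power. Since the uplink SINRs with SIC and MMSE combining sum to exactly \eqref{eq:uplink-sum-rate-SIC}, maximizing over power allocations that satisfy \eqref{eq:powerconstrating}, including over subcarriers, gives equality of the two optimal values in both directions.

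\textbf{Main obstacle.} The hardest step is the power-conservation argument in Step 3. The $(1-\kappa)$ self-distortion term makes the SINR constraint non-affine in the desired signal. We would first rewrite each SINR constraint $\gamma_k$ as $\kappa x=\gamma_k(\text{interf}+(1-\kappa)x+\sigma^2)$, i.e.\ $(\kappa-\gamma_k(1-\kappa))x=\gamma_k(\text{interf}+\sigma^2)$. This is the ideal-hardware form with an effective SINR $\tilde\gamma_k=\gamma_k/(\kappa-\gamma_k(1-\kappa))$. The classical duality with these effective targets $\tilde\gamma_k$ then applies verbatim. Because the map $\gamma_k\mapsto\tilde\gamma_k$ is the same in both links, the distortion-aware SINR regions coincide, and so do the sum rates.
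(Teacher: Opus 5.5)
Your Step~3 argument matches the paper's sketch: per-subcarrier SINR duality with combiners reused as precoders and SIC in the reverse of the DPC encoding order. It keeps equal total power, and MMSE--SIC telescoping then gives \eqref{eq:uplink-sum-rate-SIC}, maximized under \eqref{eq:powerconstrating}. The rescaling $\tilde\gamma_k=\gamma_k/(\kappa-\gamma_k(1-\kappa))$ is a clean way to absorb the self-distortion, and it is well defined because every achievable $\gamma_k<\kappa/(1-\kappa)$.

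However, the decomposition announced in Step~2 is false, and so is the plan of applying duality ``to each term''. Take $\sigma^2=1$, $K=2$, $a_{ki}=|\bar{\vect{h}}_k^{\Htran}\vect{p}_i|^2$, user~2 encoded last, and $a_{12}=0$. Your own (correct) per-user formula gives user~2 the rate $\log_2\frac{1+a_{22}+(1-\kappa)a_{21}}{1+(1-\kappa)(a_{21}+a_{22})}$, which tends to $0$ as $a_{21}\to\infty$. ``Ideal minus $(1-\kappa)$-scaled'' instead gives $\log_2\frac{1+a_{22}}{1+(1-\kappa)a_{22}}$, which does not depend on $a_{21}$. User~1's two expressions coincide, so even the sum rates differ. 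The downlink rates do not telescope, because each user has its own channel and pre-cancelled streams still leak through the distortion. The two-term log-det form exists only on the uplink side, after dualization with MMSE combining. Dualizing two downlink terms separately would also not return the pair $\vect{D}_\rho[\nu]$, $(1-\kappa)\vect{D}_\rho[\nu]$, since the dual-power map is not homogeneous at fixed noise. Drop Step~2's decomposition entirely.

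Second, your `main obstacle' is misplaced: the self-term is the easy part, and after rescaling the classical duality does not apply verbatim. Write $\vect{p}_i=\sqrt{q_i}\vect{u}_i$ with $\|\vect{u}_i\|=1$ and $G_{ki}=|\bar{\vect{h}}_k^{\Htran}\vect{u}_i|^2$. The rescaled downlink constraint is $q_kG_{kk}/\tilde\gamma_k=\sum_{i>k}q_iG_{ki}+(1-\kappa)\sum_{i<k}q_iG_{ki}+\sigma^2$. The second sum is the distortion of pre-cancelled streams, so the coupling matrix $\vect{C}$ is full rather than triangular as in the ideal case.

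What you must verify is that the uplink with combiners $\vect{u}_k$ and decoding order $K,\ldots,1$ produces exactly $\vect{C}^{\Ttran}$. There, user $k$ sees $\sum_{i<k}\rho_iG_{ik}+(1-\kappa)\sum_{i>k}\rho_iG_{ik}$, where the second sum is the distortion of already-decoded users, which SIC cannot remove. The $(1-\kappa)$ weights land in the right places because stream $i$ is pre-cancelled at downlink user $k$ exactly when uplink user $k$ is decoded before uplink user $i$.

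With $\vect{\Gamma}=\mathrm{diag}(G_{kk}/\tilde\gamma_k)$, equal total power follows from $\sigma^2\vect{1}^{\Ttran}\vect{q}=\vect{\rho}^{\Ttran}(\vect{\Gamma}-\vect{C})\vect{q}=\sigma^2\vect{1}^{\Ttran}\vect{\rho}$. This is a weighted sum of the equations, not a plain sum. Nonnegativity $\vect{\rho}\geq\vect{0}$ follows because a feasible downlink point makes $\vect{\Gamma}-\vect{C}$, and hence its transpose, a nonsingular M-matrix. With these two corrections, your argument is a complete and more explicit version of the paper's sketch.
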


\begin{IEEEproof}
This result is proved following the same approach as in~\cite[Sec.~III]{duality2003}, but taking distortion noise into account. The first step is to formulate the individual uplink and downlink achievable rates with SIC and DPC, respectively. Next, one can establish an uplink-downlink duality that shows that any rate achievable in the uplink can also be achieved in the downlink if the receive combining vectors are used as precoding vectors, and the same total transmit power is used. Hence, the maximum downlink sum rate equals the maximum uplink sum rate under a total power constraint.
By utilizing the optimal receive combining, one finally arrives at the expression in 
\eqref{eq:ul_optimization} under the power constraint stated in \eqref{eq:powerconstrating}.
\end{IEEEproof}

The following result shows that the downlink capacity is only limited by hardware impairments when the transmit power grows large, and not by the antenna locations.

\begin{corollary}
Under optimal power allocation and if $\bar{\mathbf{H}}[\nu]$ has rank $K$, the downlink DPC sum rate has the following upper bound in the high-SNR regime:
\begin{align}
\lim_{P_{\mathrm{tot}} \to \infty}
R_\Sigma^{\mathrm{DL,DPC}}= K \log_2 \left(\frac{1}{\mathrm{EVM}^2}\right).
\end{align}
\end{corollary}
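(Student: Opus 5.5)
By Lemma~\ref{lemma:duality}, $R_\Sigma^{\mathrm{DL,DPC}}$ equals the optimal value of \eqref{eq:ul_optimization} under \eqref{eq:powerconstrating}. The plan is therefore to study this uplink problem as $P_{\mathrm{tot}}\to\infty$ and squeeze its value between two bounds that both converge to $K\log_2(1/\EVM^2)$. The lower bound comes from one feasible power allocation together with Proposition~\ref{cor1}. The upper bound must hold uniformly over all feasible allocations.

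\emph{Lower bound.} Choose the uniform allocation $\vect{D}_\rho[\nu]=\frac{P_{\mathrm{tot}}}{SK}\vect{I}_K$ for every $\nu$. It satisfies \eqref{eq:powerconstrating} with equality. It also has the form $\rho\vect{A}[\nu]$ required by Proposition~\ref{cor1}, with $\rho=P_{\mathrm{tot}}$ and $\vect{A}[\nu]=\frac{1}{SK}\vect{I}_K$ fixed. Since $\bar{\vect{H}}[\nu]$ has rank $K$ and $\EVM>0$, Proposition~\ref{cor1} shows that the objective at this allocation tends to $K\log_2(1/\EVM^2)$. Because the optimum is at least the value of any feasible point, $\liminf_{P_{\mathrm{tot}}\to\infty}R_\Sigma^{\mathrm{DL,DPC}}\geq K\log_2(1/\EVM^2)$.

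\emph{Upper bound.} Fix an arbitrary feasible $\{\vect{D}_\rho[\nu]\}$. Repeat the eigen-decomposition step from the proof of Proposition~\ref{cor1}, now with $\vect{B}[\nu]=\bar{\vect{H}}[\nu]\vect{D}_\rho[\nu]\bar{\vect{H}}^{\Htran}[\nu]$. This matrix is positive semidefinite with rank at most $K$, so its nonzero eigenvalues are some $\mu_1[\nu],\ldots,\mu_r[\nu]$ with $r\le K$. The objective then becomes
\begin{align*}
\frac{1}{S}\sum_{\nu=0}^{S-1}\sum_{k=1}^{r} g\!\left(\mu_k[\nu]/\sigma^2\right),
\qquad g(x)=\log_2\frac{1+x}{1+(1-\kappa)x}.
\end{align*}
Because $0<1-\kappa<1$, the function $g$ is nonnegative and strictly increasing on $[0,\infty)$, since $\frac{d}{dx}\frac{1+x}{1+cx}=\frac{1-c}{(1+cx)^2}>0$. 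Its supremum is $\log_2\frac{1}{1-\kappa}=\log_2(1/\EVM^2)$. Each inner sum is therefore at most $K\log_2(1/\EVM^2)$, for every allocation and every $P_{\mathrm{tot}}$. Hence $R_\Sigma^{\mathrm{DL,DPC}}\leq K\log_2(1/\EVM^2)$ for all $P_{\mathrm{tot}}$, and combining this with the lower bound gives the claimed limit.

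\emph{Expected obstacle.} The delicate step is making the upper bound uniform over the optimization variable. The optimal allocation may depend on $P_{\mathrm{tot}}$ and may even switch off users, which makes $\vect{B}[\nu]$ rank-deficient. Proposition~\ref{cor1} does not apply directly in that case, because it assumes a fixed $\vect{A}[\nu]$. The monotonicity and boundedness of $g$, together with the rank bound $r\le K$, resolve this without identifying the optimizer. The statement also implicitly requires that the supremum in Lemma~\ref{lemma:duality} is attained or at least well defined. This holds because the objective is continuous on the compact feasible set defined by \eqref{eq:powerconstrating} and $\vect{D}_\rho[\nu]\succeq 0$.
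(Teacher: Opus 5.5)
Your proof is correct. It is more careful than the paper's, though the route is the same in spirit (duality plus the eigenvalue form from Proposition~\ref{cor1}).

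The paper's proof is a single sentence. It invokes Lemma~\ref{lemma:duality} and then asserts that Proposition~\ref{cor1} gives the high-SNR limit of the dual uplink problem ``for any power allocation.'' Strictly, Proposition~\ref{cor1} only covers allocations $\rho\mathbf{A}[\nu]$ with $\mathbf{A}[\nu]$ fixed, and implicitly with all $a_k[\nu]>0$. If some entry is zero, $\mathbf{B}[\nu]$ loses rank and the limit drops to $r\log_2(1/\mathrm{EVM}^2)$ with $r<K$. The maximizer of \eqref{eq:ul_optimization} need not be of the form $P_{\mathrm{tot}}\mathbf{A}[\nu]$ with a fixed $\mathbf{A}[\nu]$, so the paper's step is heuristic.

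Your sandwich closes exactly this gap:
\begin{itemize}
\item You use Proposition~\ref{cor1} only once, at the uniform feasible allocation, to obtain the $\liminf$.
\item On the other side you replace it with the observation that $g(x)=\log_2\frac{1+x}{1+\mathrm{EVM}^2x}$ is increasing and bounded by $\log_2(1/\mathrm{EVM}^2)$. Together with $\mathrm{rank}(\mathbf{B}[\nu])\le K$, this bounds the objective uniformly over all feasible allocations, including $P_{\mathrm{tot}}$-dependent ones that switch users off.
\end{itemize}

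This buys two things the paper's argument does not. It shows $R_\Sigma^{\mathrm{DL,DPC}}<K\log_2(1/\mathrm{EVM}^2)$ for every finite $P_{\mathrm{tot}}$, which is what justifies calling the limit an ``upper bound'' in the statement. It also makes clear that the rank-$K$ assumption is needed only for the achievability half. The paper's route buys only brevity.

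Two minor remarks:
\begin{itemize}
\item The compactness and attainment argument is unnecessary, since the sandwich works equally well with a supremum.
\item Your claim $0<1-\kappa<1$ tacitly assumes $0<\mathrm{EVM}<1$. The case $\mathrm{EVM}=1$ is trivial, since both sides are zero. $\mathrm{EVM}>0$ is needed anyway for the right-hand side to be finite, and the paper makes the same implicit assumption by relying on Proposition~\ref{cor1}.
\end{itemize}
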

\begin{IEEEproof}
The result follows directly by combining the uplink–downlink duality in Lemma~\ref{lemma:duality} with the high-SNR limit of the dual uplink problem, which is obtained by Proposition~\ref{cor1} for any power allocation.
\end{IEEEproof}

\section{Problem Formulation and Solution}
\label{sec:problem-formulation}
In Sections \ref{sec:uplink} and \ref{sec:uplink}, we derived achievable sum rates for the uplink and downlink with both linear and non-linear processing.
In this section, we leverage these expressions to optimize the antenna locations at the BS. Since the BS is equipped with movable antennas, the antenna positions 
$\vect{P} = [ \vect{p}_1, \ldots,  \vect{p}_M]$ can be optimized to maximize the achievable sum rate $R_{\Sigma}$. We assume antenna $m$ can be moved within a specific region $\mathcal{C}_m \subset \mathbb{R}^3$ as in the prior works~\cite{Zhu2024a,hong2025fluidantennaempowering5g} and formulate our optimization problem as follows:
\begin{align} \label{optproblem}
\underset{\vect{p}_1,\ldots,\vect{p}_M}{\text{maximize}} \quad & R_{\Sigma}(\vect{p}_1,...,\vect{p}_M), \\ \label{cona}
\text{subject to} \quad & {\vect{p}_m} \in \mathcal{C}_m, \quad 1\leq m \leq M, \\ \label{conb}
& {\| \vect{p}_m-\vect{p}_j \|}_2 \geq \lambda/2, \quad 1\leq m \neq j \leq M,
\end{align}
where (15) sets antennas $\geq$$\lambda/2$ apart to avoid mutual coupling.

We consider a sparse planar array where each antenna can be moved in non-overlapping 2D squares in the $yz$-plane, as illustrated in Fig.~\ref{simulation_setup}(a). Each square has side length $L$ and the center points are denoted as $(0,y_{m}^0,z_{m}^0)$, for $m=1,\ldots,M$. 
Hence, the set of possible positions for antenna $m$ is
\vspace{-0.5mm}
\begin{equation}
\mathcal{C}_m = \left\{ (0,y, z) \, \Big|\, |y-y_{m}^0|  \leq\frac{L}{2}, \,  | z-z_{m}^0| \leq \frac{L}{2} \right\}.
\end{equation}

\subsection{Proposed Solution Algorithm}

Next, we present our proposed algorithm for optimizing antenna positions to solve the problem formulated in \eqref{optproblem}. We utilized the particle swarm optimization (PSO) method~\cite{PSObookClerc} implemented using the Yarpiz toolbox~\cite{kalami2020ypea}, similar to our previous work~\cite{PIA2025} that considered narrowband systems. In essence, PSO considers $N_{\textrm{pt}}$ candidate solutions, known as particles, where each particle represents a unique configuration of the antenna positions. The search space is defined by the constraints in \eqref{cona} and \eqref{conb}. 
Each particle updates its position iteratively based on both its individual best-known position and the swarm's global best-known position, as outlined in Algorithm 1. The sum rate increases monotonically in each iteration, and the global best-known antenna positions are selected when the algorithm terminates ($100$ iterations were deemed sufficient in our numerical study).

\begin{algorithm}[t!]
\caption{PSO for Antenna location optimization}
\begin{algorithmic}[1]
\State Initialize particles randomly satisfying \eqref{cona}--\eqref{conb}.
\State Evaluate the sum rate for each particle.
\State Update individual and global best positions.
\State Update particle velocities and positions according to~\cite[Eq.~(18)--(19)]{PIA2025}.
\If{maximum iterations reached}
    \State Terminate.
\Else
    \State Go to Step 2.
\EndIf
\end{algorithmic}
\end{algorithm}

\section{Performance Evaluation}\label{sec:results}
In this section, the performance gains of movable antennas compared to conventional fixed antenna arrays of different shapes is evaluated from multiple perspectives. Our goal is to identify the operating conditions and propagation scenarios in which antenna mobility provides the largest benefits. We consider a BS array with $M$ movable antennas, where each antenna can move independently within a non-overlapping square region of size $5\lambda \times 5\lambda$. These square regions are arranged adjacent to each other to form a planar array structure, as illustrated in Fig.~\ref{simulation_setup}(a). 
\begin{figure*}[t!]
\centering
\begin{overpic}
[width=\columnwidth,tics=10]{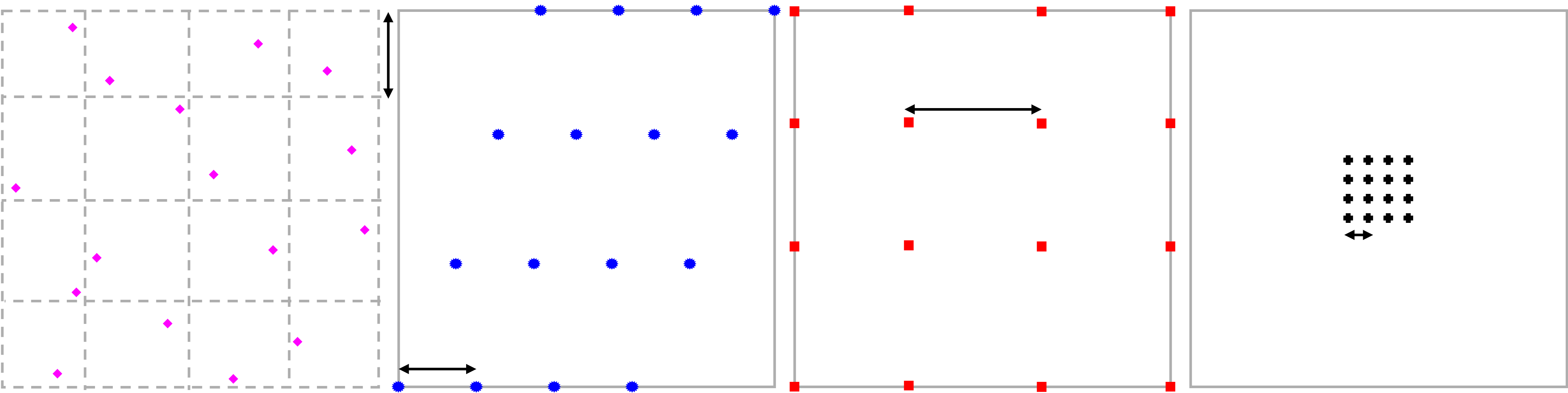} 
	 \put(32,-1.2){\footnotesize Staggered URA}
     \put(26.5,3){\footnotesize $\frac{16\lambda}{3}$}
     \put(6,-1.2){\footnotesize Movable antenna}
     \put(22.5,21.5){\footnotesize $5\lambda$}
     \put(58,-1.2){\footnotesize Sparse UPA}
     \put(60.5,19){\footnotesize $\frac{20\lambda}{3}$}
     \put(84,-1.2){\footnotesize Compact UPA}
     \put(86,8){\footnotesize $\frac{\lambda}{2}$}
\end{overpic} \vspace{-2mm}

\caption{Illustration of all considered benchmark arrays with 16 antennas. The antenna locations for the movable antennas were computed for a single random realization of the user locations. }
\label{simulation_setup}
\end{figure*}
Key parameters are given in Table~\ref{Parameters}. 
\begin{table}[t!]
    \centering \vspace{-4mm}
    \caption{Summary of simulation parameters}
    \label{Parameters}
   \begin{tabular}{ p{0.55cm}||p{4.4cm}||p{2cm}}
    \hline
    \textbf{Var.} & \textbf{Description} & \textbf{Default}\\
    \hline
    $K$ & Number of users & $10$   \\
    $\sigma^2$& Noise variance (pW) & $3.98$   \\
    $M$ & Number of BS antennas & $16$\\
    $\Delta$&Subcarrier spacing (kHz)& 15\\
    $f_c$ & Carrier frequency (GHz) & 3 \\
    $N_{\textrm{pt}}$ & Number of particles in PSO& 150\\
    $\rho$ & Transmit power (mW/MHz)&1 \\
    $r$ & Radial distance for users (m) & $\sqrt{\mathcal{U}(100^2,300^2)}$  \\
    $\phi$ & Azimuth angle for users (rad) &  $\mathcal{U}(-\pi/3,\pi/3) $ \\
    \hline
    \end{tabular} \vspace{-3mm}
\end{table}

The BS is mounted on a wall with its center at a height of $4$ meters, defining the origin of the coordinate system. It serves $K=10$ outdoor users with the height of $1.25$ meters, each randomly placed in a region $(r\cos(\phi),r\sin(\phi),0)$, where the angle $\phi \sim \mathcal{U}[\phi_{\min}, \phi_{\max}]$ is uniformly distributed and the radial distance is drawn as $r \sim \sqrt{\mathcal{U}[r_{\min}^2, r_{\max}^2]}$ to ensure a uniform spatial user density in the coverage area. This setup is depicted in Fig.~\ref{user_setup} for a set of random user locations.

\begin{figure}[b!]
        \centering         
        {\includegraphics[width=1\textwidth]{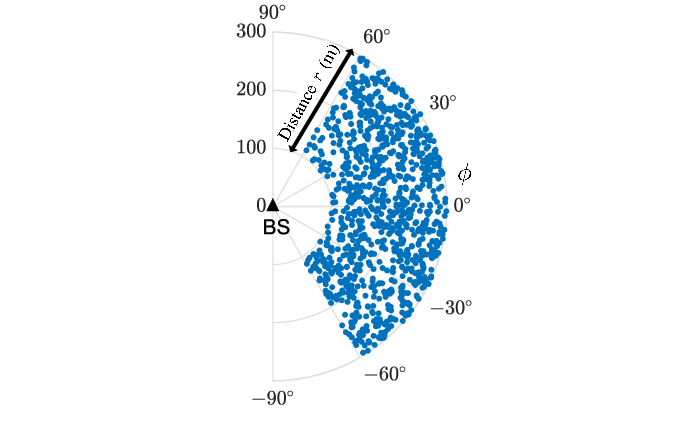}}
        \caption{Example of user distribution with 100 Monte Carlo realizations.}
\label{user_setup}  \vspace{-2mm}
\end{figure}
The following benchmark schemes are considered:

\begin{enumerate}
    \item \textbf{Zero interference}:
    A theoretical upper bound in which inter-user interference is artificially eliminated by assuming ideal spatial separation among users. This array is obtained by optimizing movable antenna positions to enforce zero mutual interference.
    
    \item \textbf{Sparse UPA}: A uniform planar array with fixed antenna positions and a large inter-element spacing of $20\lambda/3$ in both the horizontal and vertical directions. The array occupies the same overall aperture as the movable antenna system. 
    
    \item \textbf{Staggered URA}: It is a uniform rectangular array with systematically offset rows that maintains the same aperture size as the reference UPA. The horizontal coordinates of the array are equally spaced, and would coincide with those of a sparse ULA with an inter-element spacing of $20\lambda/15$. This geometry provides more uniform 2D spatial sampling than a sparse UPA and serves as a meaningful benchmark, since a movable antenna array can converge to this configuration.
    \item \textbf{Compact UPA}: It is the UPA which is placed around the origin with the traditional compact antenna spacing of $\lambda/2$.
\end{enumerate}
We consider two propagation environments: LoS-dominant and rich scattering. In the LoS-dominant case, we consider the urban microcell model in~\cite[Sec.~5.3.2]{3GPP25996}, which is used to calculate path losses and generate propagation paths in the far field. The environment includes one LoS path and six scattering clusters, each consisting of $20$ discrete paths.\footnote{The clusters are distributed uniformly in a $\pm 40^\circ$ azimuth interval and $\pm 20^\circ$ elevation interval around the LOS angles. The paths are spread in a $\pm 5^\circ$ interval around the cluster center. The delays are up to $10$ times longer than for the LoS path and their strengths are computed as in~\cite[Sec.~5.3.2]{3GPP25996}.} The Rician $\kappa$-factor is $10$\,dB and $\EVM=0.02$. In the rich scattering case, we consider an environment with $100$ small dual-path clusters uniformly distributed over all angles. The $\kappa$-factor is set to $0$\,dB, so the LoS path is only slightly stronger than the scattered paths.

\subsection{Narrowband vs. wideband}
The formulas and algorithms developed in this paper build on a wideband signal model, where the delay spread is comparable to or larger than the symbol duration, resulting in a frequency-selective (multi-path) channel. 
However, when the delay spread is much smaller than the symbol duration, the channel can be considered as frequency flat, also known as a narrowband channel~\cite{Goldsmith2005}. In this case, the multipath channel in \eqref{eq:hk-multitap} simplifies to
\begin{equation} \label{eq:narrowband}
\vect{h}_i =  \sum_{n=1}^{\Ncl} \alpha_{i,n} e^{- \imagunit 2 \pi \lambda (\tau_{i,n} - \eta)/c}  \vect{a}_{\vect{P}}(\varphi_{i,n},\theta_{i,n}),
\end{equation}
where $\vect{h}_i \in \mathbb{C}^M$ represents the narrowband channel vector for the $i$-th user.
The received signal across the array is thus a weighted superposition of array response vectors, each corresponding to one propagation path. In the narrowband case, all subcarriers experience the same channel response, and the OFDM model reduces to a single equivalent flat-fading channel per user.

To study how the transition from narrowband to wideband channels affects the average sum rate when using different array configurations, we vary the number of OFDM subcarriers, while keeping the subcarrier spacing fixed at $\Delta=15$kHz. This expands the total system bandwidth, improves delay resolution, and allows the system to resolve more of the channel's inherent frequency selectivity, effectively transitioning from a narrowband to a wideband regime. 
\begin{figure*}[t!]
\centering
\subfloat[LoS-dominant scenario.] 
{\includegraphics[width=0.49\textwidth]{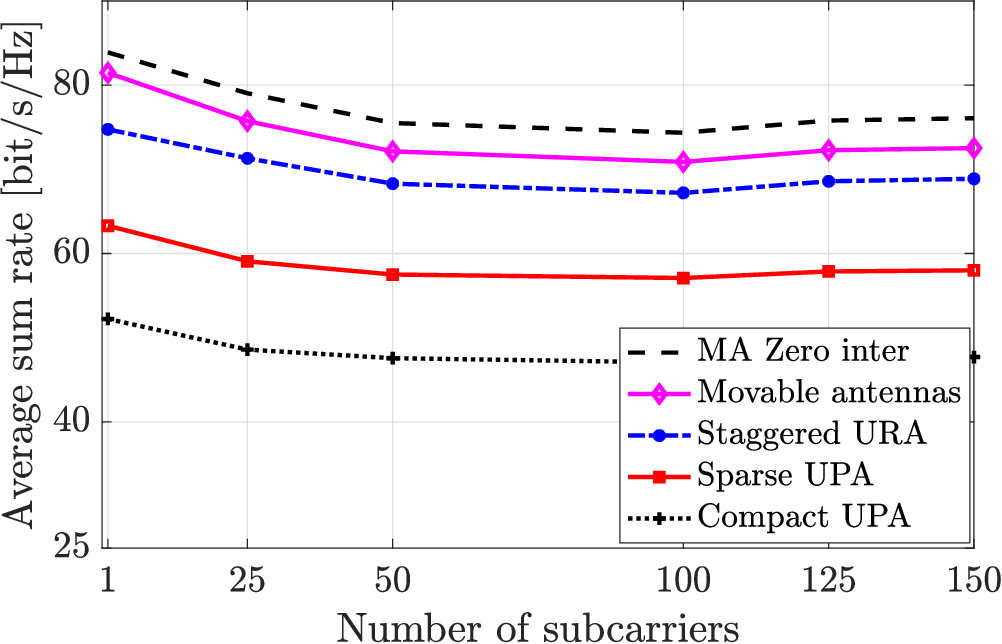}}\label{LoS_mean}\hfill
\subfloat[Rich scattering scenario.]
{\includegraphics[width=0.49\textwidth]{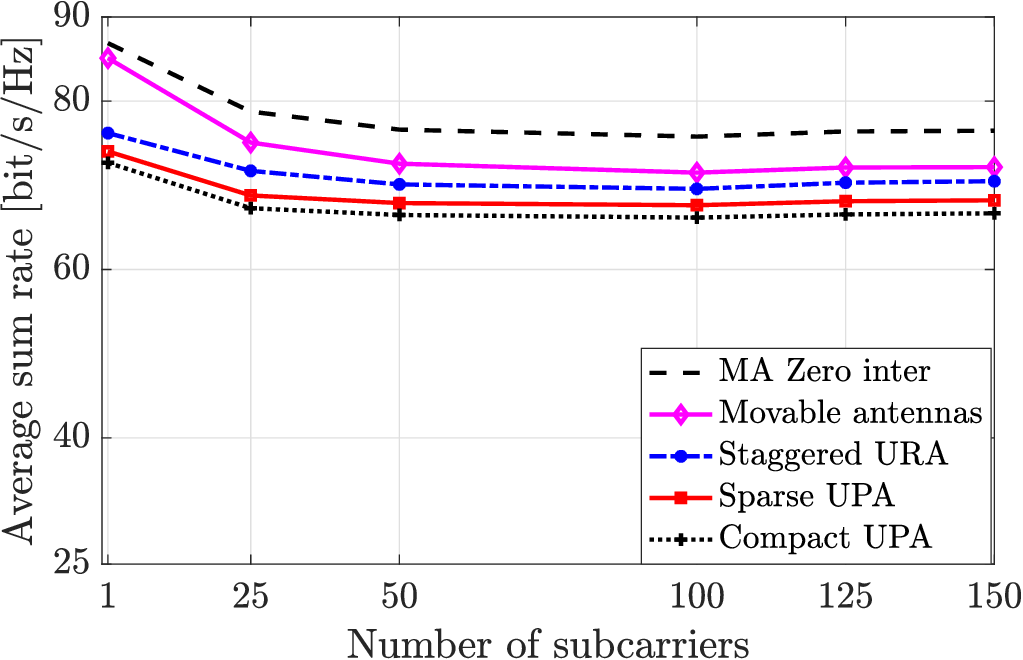}\label{NLoS_mean}}\hfill
\centering
\caption{The average uplink sum rate with SIC for a varying number of subcarriers. The BS has 16 antennas and communicates with 10 users. }
\label{mean_sumrate}
\end{figure*}

The average uplink sum rate achieved with SIC for a varying number of subcarriers is depicted in Fig.~\ref{mean_sumrate}. The results confirm that movable antennas (MAs), optimized to maximize the sum rate, as defined in \eqref{eq:sumrate-expression}, achieve the highest performance compared to all fixed benchmark schemes in both LoS-dominant and rich scattering scenarios. In Fig.~\ref{mean_sumrate}(a), corresponding to the LoS case, the MA scheme exhibits only a $2.9\%$ performance gap to the zero-interference benchmark when having a single subcarrier, whereas the compact UPA only achieves approximately $60\%$ of the upper bound with zero-interference. This performance gap persists as the number of subcarriers increases. The staggered URA is consistently the second-best array configuration, after MAs, in both LoS-dominant and rich scattering scenarios. This behavior can be explained by its staggered geometry, which preserves the overall array aperture length while also giving the antennas unique horizontal locations so that one obtains a sparse ULA if the antennas were projected onto the horizontal line. As a result, the staggered URA enables improved horizontal angular resolution and reduced spatial correlation compared to a compact UPA, enabling better separation of user channels and mitigation of inter-user interference, particularly when users are uniformly distributed within the coverage area on the ground, where horizontal resolution is more important than vertical resolution.

Fig.~\ref{mean_sumrate}(b) shows the corresponding results in the rich scattering scenario. The fixed benchmark schemes achieve a higher average sum rate than in the LoS case due to richer scattering, which creates greater channel variability, even between closely spaced users. For ease of comparison, a normalized path-loss model is used in the rich scattering simulations so the upper bound is roughly the same as in Fig.~\ref{mean_sumrate}(a).\footnote{If we were to use a different path-loss model in the rich scattering scenarios, the SNRs would generally be reduced. This results in all curves shifting downward, but would not affect the observed trends between the curves.} Notably, the performance gain of movable antennas over the fixed benchmarks decreases as the number of subcarriers increases, highlighting the effectiveness of movable antennas primarily in narrowband systems, where the number of subcarriers is less than $50$. This is attributed to the fact that MAs can create nearly orthogonal user channels on a particular subcarrier, but not simultaneously across many different ones.
 
\begin{figure*}[t!]
\centering
\subfloat[LoS-dominant scenario] 
{\includegraphics[width=0.49\textwidth]{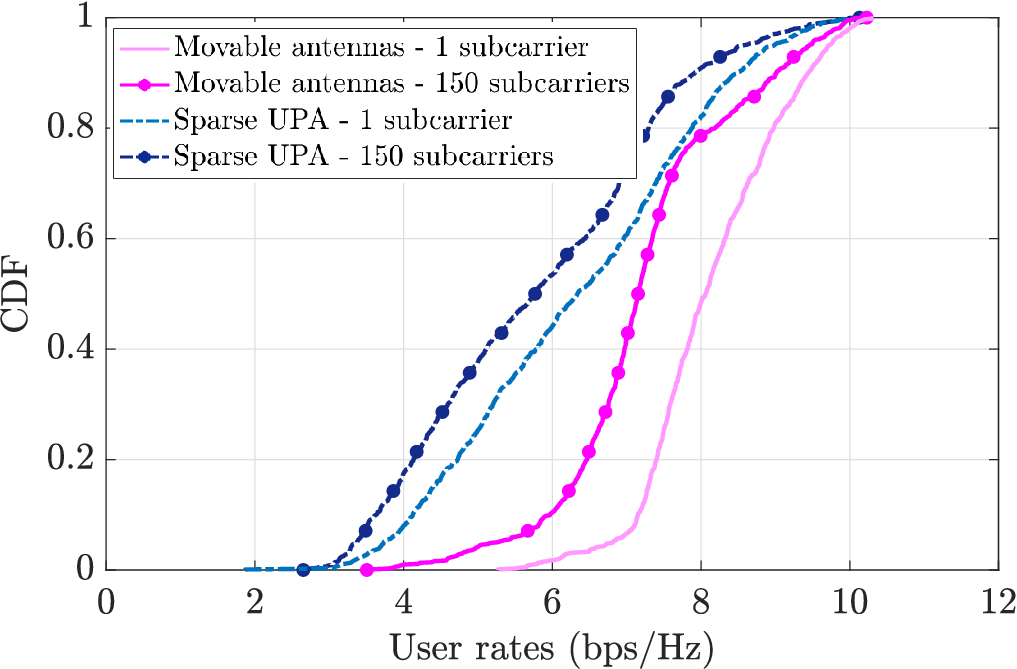}}\hfill
\subfloat[Rich scattering scenario]
{\includegraphics[width=0.49\textwidth]{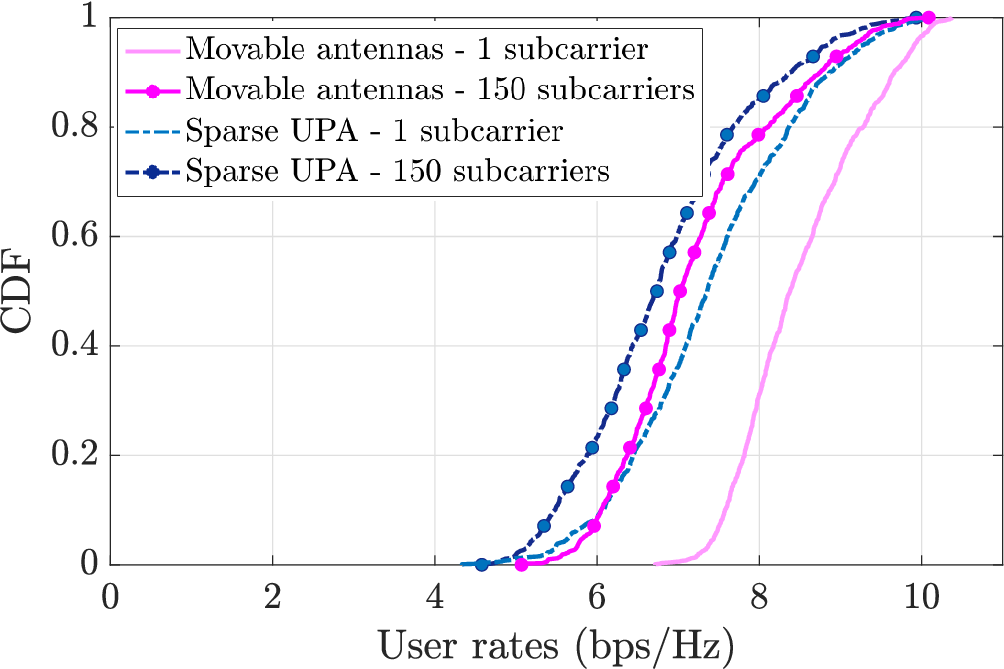}}\hfill
\centering
\caption{Variation of the user rates for two specific numbers of subcarriers (1 and 150) for both LoS-dominant and rich scattering environments, for 0.02 EVM and 10 users.}
\label{userrates}
\end{figure*}

The behaviors observed in Fig.~\ref{mean_sumrate} can be further understood by examining how the average uplink sum rate is computed, either by averaging the sum rate across all subcarriers or by aggregating the per-user rates. The subcarrier-averaged sum rate exhibits a nearly constant relative performance gap across different numbers of subcarriers in both LoS-dominant and rich scattering scenarios, indicating limited dependence on the specific number of subcarriers. For brevity, these results are not plotted in the paper.
However, when analyzing the per-user rates that contribute to the average sum rate, distinct behaviors emerge. These results are shown in Fig.~\ref{userrates}, which presents the cumulative distribution function (CDF) of per-user rates for random user locations.
 
In the rich scattering scenario, the variability of the per-user rates remains relatively stable as the number of subcarriers increases. This behavior is due to the rich scattering environment, which results in low channel correlation in the frequency domain and leads to sufficiently diverse channel realizations across different subcarriers, even when only a few subcarriers are considered. In contrast, in the LoS-dominant scenario, the per-user rates depend more strongly on the number of subcarriers because the dominant propagation paths induce strong frequency correlation across subcarriers. As a result, the effective channel gains become more sensitive to the frequency sampling resolution, leading to larger disparities between users. This indicates that there are some combinations of user locations that are challenging to handle even with movable antennas---unless one gets richer multipaths as in the rich scattering case.

The primary objective of this section is to identify the most relevant practical use case for movable antennas. Therefore, the remainder of this section focuses on the LoS scenario, where the array geometry has the greatest impact. To balance accuracy and computational efficiency, $50$ subcarriers are used in subsequent simulations, as the average sum-rate performance exhibits a consistent trend as the subcarrier number is increased beyond this point.

\subsection{Impact of hardware impairments}

Next, we investigate the impact of hardware impairments on the considered array configurations. Fig.~\ref{figure_simulationEVM} shows the average uplink sum rate with SIC for different EVM values. As the EVM increases, the sum rate of all schemes decreases and converges to the theoretical upper bound derived in \eqref{eq:upper-limit_evm}. This behavior confirms that hardware impairments uniformly degrade system performance and impose a common rate ceiling, regardless of the array configuration.
Moreover, while movable antennas provide noticeable performance gains in situations with low impairments (smaller EVM), these gains diminish as the system becomes hardware-limited (larger EVM). Consequently, movable antennas are most beneficial when the hardware impairments are sufficiently small, whereas their advantage over fixed arrays vanishes in the high impairment regime.

\begin{figure}[t!]
        \centering         {\includegraphics[width=1\textwidth]{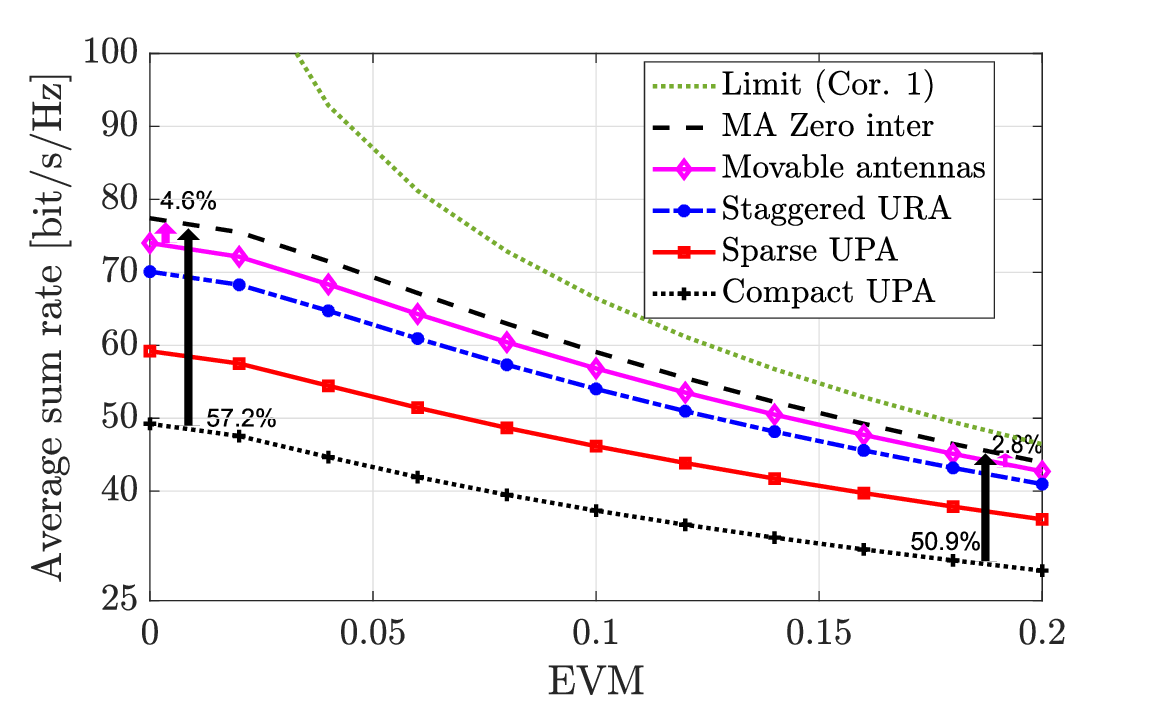}}
        \caption{Average sum rate versus the EVM.}
\label{figure_simulationEVM}  \vspace{-2mm}
\end{figure}

\subsection{Linear vs non-linear processing}

Next, we examine whether the performance gains and optimal configuration of movable antennas depends on the choice of receiver processing or not. Here, we compute uplink linear processing, achieving the sum rate defined in \eqref{sum_rate_lin}, and nonlinear processing based on SIC, achieving the sum rate in \eqref{eq:sumrate-expression}. The results are shown in Fig.~\ref{figure_linvsnonlin_uplink} for LoS propagation environment with $50$ subcarriers and an EVM of $0.02$. 

We optimize the MA configuration using one processing method and then evaluate its performance using the other. When the MA system is optimized for SIC and subsequently operates using linear processing, it achieves nearly the same sum-rate performance as when it is optimized directly for linear processing. A similar observation holds in the reverse case: optimizing the MA configuration for linear processing and evaluating it using SIC achieves performance comparable to MA configurations optimized explicitly for SIC. 

These results indicate that the MA optimization is not sensitive to the specific uplink processing scheme. This is because the MA array shapes the effective multi-user channel geometry such that the channel is well-conditioned and user separability (low inter-user correlation) is obtained, mostly irrespective of the receiver processing methods. These properties benefit both linear processing and SIC, so the configurations optimized for either method are nearly identical. The remaining performance gap between SIC and linear processing exists because SIC can further remove the residual multiuser interference at detection, whereas linear processing cannot. Consequently, employing low-complexity linear processing during MA optimization can significantly reduce computational complexity and training time without sacrificing end performance. 

As expected, all schemes achieve higher sum rates under SIC than under linear processing, as expected since non-linear processing is generally needed to achieve capacity. For example, in a system with $K=10$ users and $M=16$ antennas, when non-linear processing is employed at the receiver, a staggered URA with SIC achieves sum rate performance that is only marginally lower than that of a MA array (regardless of whether the MA geometry is optimized under linear or non-linear processing). In such scenarios, the limited performance gap suggests that a fixed antenna structure may be preferable, since the additional mechanical and hardware complexity required to enable antenna mobility may not be justified.

\begin{figure}[t!]
        \centering         
        {\includegraphics[width=1\textwidth]{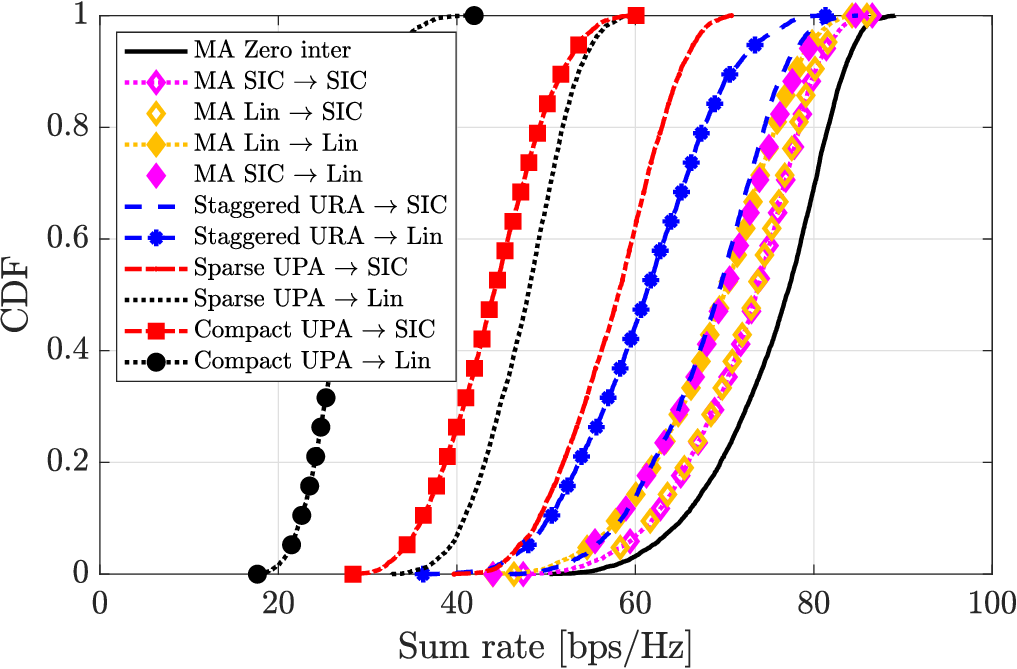}}
        \caption{Uplink linear (shown with filled markers) vs non-linear processing.}
\label{figure_linvsnonlin_uplink}  \vspace{-2mm}
\end{figure}

Further insights can be obtained by examining the average user rates for both processing methods, as shown in Fig.~\ref{figure_userrates_linvsnonlin_uplink}. Under linear processing, the average user rates are relatively uniform across users. In contrast, SIC introduces a hierarchical structure due to the decoding order: users decoded later experience less residual interference and therefore tend to achieve higher rates. Minor fluctuations observed in the curves are due to the finite number of Monte Carlo realizations and random channel variations.

\begin{figure}[t!]
        \centering 
{\includegraphics[width=1\textwidth]{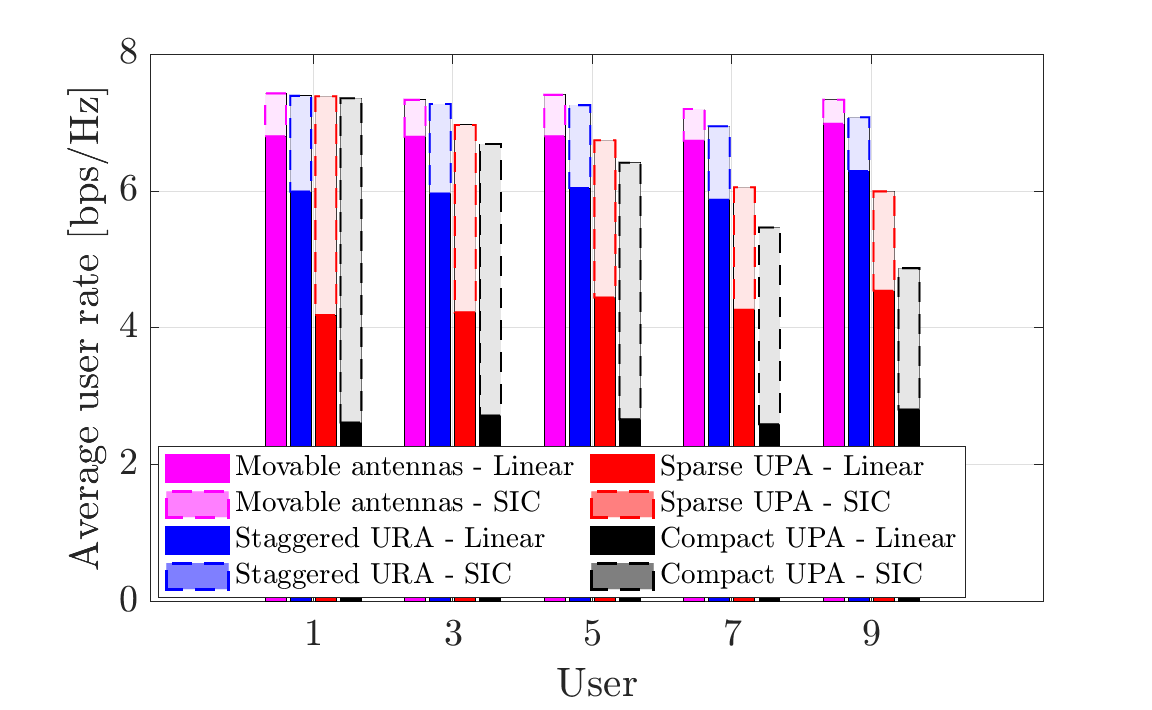}}
        \caption{Average user rates comparison between uplink linear vs non-linear processing for odd number of users.}
\label{figure_userrates_linvsnonlin_uplink}  \vspace{-2mm}
\end{figure}

While the above results focus on a fixed number of users, it is also important to examine how the relative gains of movable antennas evolve with the system load. To assess this impact of the number of users, Fig.~\ref{figure_sumrate_vs_sumrate} shows the uplink sum rate under linear processing as a function of the number of users. When the number of users is small, all considered antenna configurations achieve nearly identical performance, indicating that movable antennas provide limited additional gains in lightly loaded systems. As the number of users increases, the performance gap between movable antenna arrays and fixed antenna structures becomes progressively more pronounced. This behavior highlights that the benefits of antenna mobility manifest primarily in moderate-to-heavily loaded scenarios, where spatial interference becomes a dominant performance bottleneck. In contrast, in the regimes with only a few users deployed, the additional degrees of freedom offered by movable antennas are underutilized, so the additional mechanical and hardware complexity associated with antenna mobility may not always be justified.
\begin{figure}[t!]
        \centering 
{\includegraphics[width=1\textwidth]{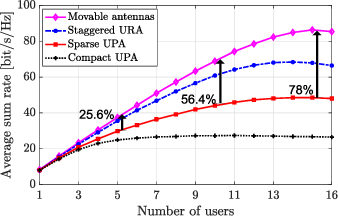}}
        \caption{Average sum rate for varying number of users (1 to 16) with 50 subcarriers, LoS-dominant case, and linear processing.}
\label{figure_sumrate_vs_sumrate}  \vspace{-2mm}
\end{figure}

\subsection{TDD vs FDD operation}
After analyzing the independence of MA optimization with respect to the receiver processing, we consider the difference between uplink and downlink communication. This can be implemented with different duplexing modes, specifically TDD and FDD. In TDD systems, the uplink and downlink share the same carrier frequency and are separated in time, whereas in FDD systems, the uplink and downlink transmissions occur simultaneously over different carrier frequencies.

\subsubsection{TDD}

We now investigate the performance of the MA system in a TDD setting for both uplink and downlink transmissions, and compare it with the considered fixed array benchmark schemes. We consider the performance achieved with linear processing for both uplink and downlink transmissions, using the sum-rate expressions in \eqref{sum_rate_lin} and \eqref{sum_rate_lin_DL}, respectively. A carrier frequency of 3~GHz is used for both links. For the downlink, a base-station transmit power spectral density of $20$~mW/MHz is assumed, while for the uplink, each user transmits with $1$~mW/MHz. The CDF of the sum rate for different realizations of user locations, LoS scenario with $50$ subcarriers and $0.02$ EVM are shown in Fig.~\ref{figure_power_scaling_dlvsul}. Similar to the receiver-processing study, we optimize the MA configuration in one transmission direction and evaluate its performance in the other. Specifically, the MA configuration is optimized for uplink transmission and then tested for both uplink and downlink operation, and vice versa. The results indicate that the MA system performance is largely independent of the transmission direction in the TDD case. This observation implies that once the antenna positions are optimized in one direction, for example in the uplink, the same configuration can be reused for downlink transmission without re-optimization. This significantly reduces computational complexity, system overhead and latency, as the system can switch between uplink and downlink faster without the delay caused by repositioning the antennas.

The figure also shows what happens when the total downlink transmit power is increased by $10$ or $100$ times, thereby creating a large power imbalance between uplink and downlink (as often encountered in practice). The results show that the MA system's performance is robust to the choice of transmission direction, even when the transmission power is increased by one or two orders of magnitude. Increasing the downlink transmit power uniformly raises the achievable sum rates of all schemes, but does not change their relative gaps and the same scheme remains best regardless of power level. This is because optimized MAs suppress inter-user interference by finding a favorable spatial configuration, this interference structure and thus the relative performance across schemes, remains largely unchanged when transmit power is scaled up.

\begin{figure}[t!]
        \centering         {\includegraphics[width=1\textwidth]{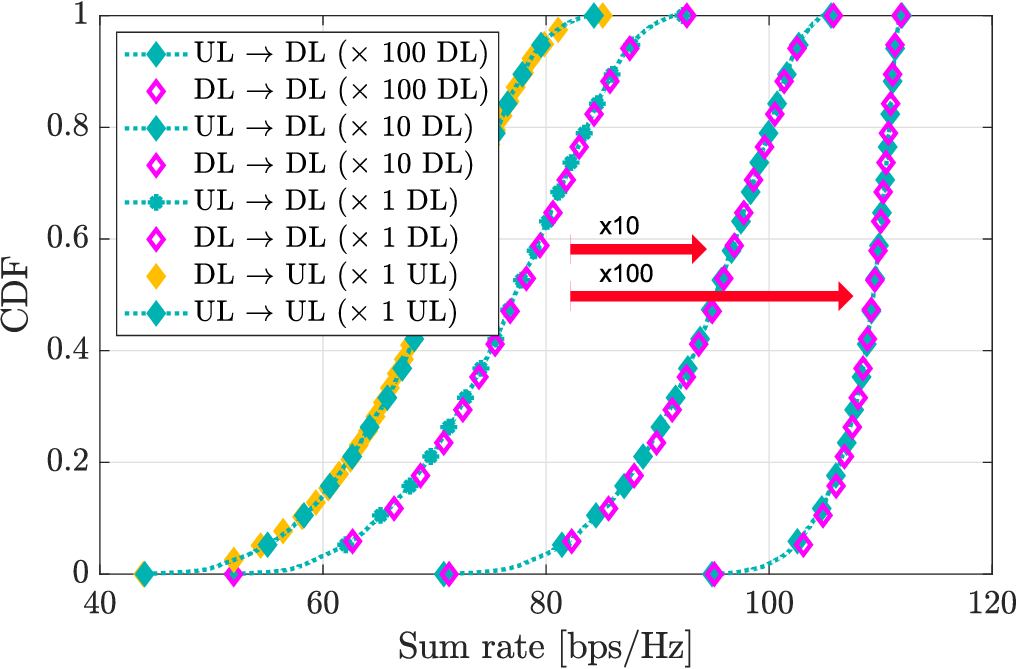}}
        \caption{CDF of the sum rate achieved by MAs in uplink or downlink transmission. The notation $X  \rightarrow Y$ indicates that the antenna configuration is optimized in $X$ direction and evaluated in $Y$ direction. DL test powers of $\times1$, $\times10$, and $\times100$ relative to 20 mW/MHz BS reference. The UL reference is 1 mW/MHz per UE.}
\label{figure_power_scaling_dlvsul}  \vspace{-2mm}
\end{figure}

\subsubsection{FDD}

The situation may differ in FDD operation since the channel vectors are frequency-dependent. Specifically, in Fig.~\ref{figure_FDD}, the downlink carrier frequency is fixed at 3~GHz, while the uplink carrier frequency is varied from 3.0 to 2.7~GHz (a 10\% relative difference). In other words, in the FDD case, the movable antenna configuration is optimized at one carrier frequency, and the resulting antenna positions are then tested for transmission at another carrier frequency without re-optimization. We observe in Fig.~\ref{figure_FDD} that as the UL–DL frequency separation increases, the MA system exhibits a gradual reduction in sum-rate performance, reflecting sensitivity to the frequency gap. Under the assumed geometry, the primary source of this degradation is the wavelength-dependent antenna spacing effect rather than changes in scattering structure. Considering frequency-dependent geometric parameters would likely further degrade performance, reducing the gap between optimized and fixed benchmark schemes. Nevertheless, even under increasing frequency separation, the MA system consistently outperforms the fixed staggered URA benchmark in both uplink and downlink transmissions. Note that there is a power imbalance between the uplink and downlink, so only curves of the same color should be compared.

\begin{figure}[t!]
        \centering         {\includegraphics[width=1\textwidth]{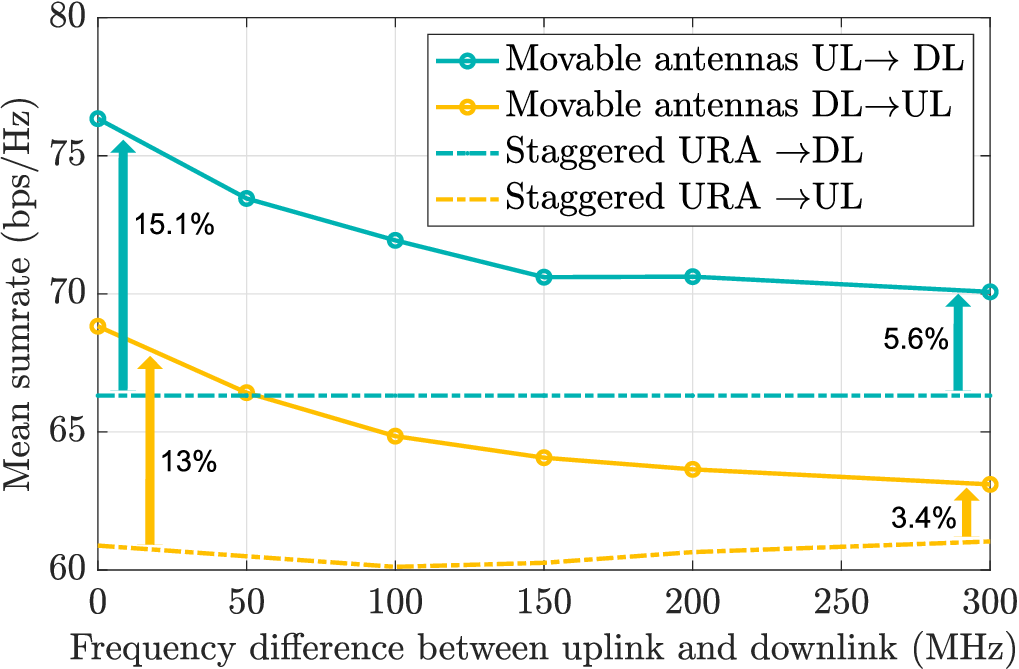}}
        \caption{Average sum rate for MA and staggered URAs for varying frequency difference between UL and DL. The notation $X  \rightarrow Y$ indicates that the antenna configuration is optimized in $X$ and evaluated in $Y$.}
\label{figure_FDD}  \vspace{-2mm}
\end{figure}

\section{Discussion and Conclusion}\label{Sect_conclusion}
This paper presented a comprehensive performance evaluation of movable antenna (MA) arrays under realistic uplink and downlink operation. For this purpose, a unified wideband system model was developed, enabling analytical characterization of the uplink and downlink sum rate under different receiver and transmitter processing schemes. The analysis specifically accounted for various configurations involving: (i)~wideband OFDM transmission, (ii)~linear and nonlinear processing, (iii)~hardware impairments, (iv)~user loading effects, and (v)~both time division duplexing (TDD) and frequency division duplexing (FDD). The key objective of this study was to identify the practical scenarios and configurations in which antenna mobility can provide \emph{substantial performance benefits} over conventional fixed MIMO arrays. The robustness of MA systems was further investigated using an EVM-based hardware impairment model, and an analytical high-SNR sum rate ceiling was derived.

The presented study led to the following key conclusions regarding MA suitability and achievable gains in practical scenarios. \emph{First}, it has been confirmed that even in the wideband setup, the MA arrays configured for sum-rate maximization can offer performance gains over fixed benchmark configurations in certain favorable propagation conditions, particularly in LoS-dominant environments with strong multi-user interference. \emph{Second}, our analysis revealed that hardware impairments impose a common performance limit across all array configurations, and that the gains offered by antenna mobility diminish as the system becomes hardware-limited.

\emph{Third}, the interaction between MA optimization and receiver processing was also examined. The results demonstrated that antenna position optimization is largely independent of the specific uplink processing method, suggesting that low-complexity linear processing can be employed during MA optimization without significantly affecting the final performance. \emph{Fourth}, the impact of user loading was analyzed, showing that the benefits of antenna mobility become more pronounced as the number of users increases and spatial interference becomes the dominant performance bottleneck. 

\emph{Finally}, MA systems were evaluated under duplexing constraints. In TDD operation, channel reciprocity enables strong uplink--downlink consistency, allowing antenna configurations optimized in one transmission direction to be reused in the other without re-optimization. In contrast, FDD operation appears to exhibit sensitivity to uplink--downlink frequency separation due to channel frequency dependence.

Overall, our results indicate that movable antennas can deliver performance gains in practical wideband multi-user systems in specific operating conditions, but their effectiveness depends strongly on propagation characteristics, hardware quality, and system complexity. 
The performance gains over well-designed fixed arrays are small in rich-scattering environments, when the user load is low, or when the hardware quality is poor. These are examples of practical situations where the extra costs of deploying and operating MAs might not outweigh the performance benefits.
These findings provide valuable insights into when antenna mobility is likely to be beneficial and when conventional fixed array configurations are a more reasonable choice. An interesting direction for future work is to extend the proposed framework to account for channel estimation errors and user mobility, enabling the evaluation of MA performance in dynamic scenarios.

\balance
\bibliographystyle{IEEEtran}
\bibliography{IEEEabrv,mybib}

\end{document}